\PassOptionsToPackage{
		pagebackref
	}{hyperref}

\documentclass[format=acmsmall, review=false, screen, nonacm]{acmart}
\usepackage{booktabs} % For formal tables
\usepackage[ruled]{algorithm2e} % For algorithms

\SetAlFnt{\small}
\SetAlCapFnt{\small}
\SetAlCapNameFnt{\small}
\SetAlCapHSkip{0pt}
\IncMargin{-\parindent}

%%%% OUR LIBS%%%%%

\usepackage{amsmath}
\usepackage{amsthm}
\usepackage{microtype}
\usepackage{thmtools}
\usepackage{thm-restate}
\usepackage{natbib}
\usepackage{amsthm}
\usepackage[nameinlink]{cleveref}

% remark environment
\AtEndPreamble{%
	\theoremstyle{acmdefinition}

	}

\usepackage[nameinlink]{cleveref}
% handle Cref in section titles
\usepackage{crossreftools}
\pdfstringdefDisableCommands{%
	\let\Cref\crtCref
	\let\cref\crtcref
}

\usepackage{subcaption}
% Make cleveref write 1(a) instead of 1a
\captionsetup[subfigure]{subrefformat=simple,labelformat=simple}

\usepackage{tikz}
\usetikzlibrary{positioning}
\usepackage{xcolor}
\usepackage{nicefrac}
\usepackage{paralist}
\usepackage[colorinlistoftodos,textsize=footnotesize]{todonotes}
\tikzset{/tikz/notestyleraw/.append style={rounded corners=0pt,inner sep=0.6ex}}

\usepackage{graphicx} % For \includegraphics
\usepackage{multirow} % For \multirow

% Patching spacing of wrapfigure
\usepackage{wrapfig}
\usepackage{etoolbox}
\makeatletter
\patchcmd\WF@putfigmaybe{\lower\intextsep}{}{}{\fail}%
\AddToHook{env/wrapfigure/begin}{\setlength{\intextsep}{0pt}}
\makeatother

\renewcommand{\epsilon}{\varepsilon}
\renewcommand{\le}{\leqslant}

\renewcommand{\ge}{\geqslant}

\usepackage{rotating}

\usepackage{tikz}
\usepackage{pgfplots}
\pgfplotsset{compat=1.15,
legend image code/.code={
\draw[mark repeat=2,mark phase=2]
plot coordinates {
(0cm,0cm)
(0.15cm,0cm)        %% default is (0.3cm,0cm)
(0.3cm,0cm)         %% default is (0.6cm,0cm)
};%
}}
\usepgfplotslibrary{groupplots}
% Choose a citation style by commenting/uncommenting the appropriate line:
%\setcitestyle{acmnumeric}
\setcitestyle{authoryear}

% Title. Note the optional short title for running heads. In the interest of anonymization, please do not include any acknowledgements.

\title[Independence of Approximate Clones]{Independence of Approximate Clones}
% Anonymized submission.
%\author[Generalizing Instant Runoff Voting to Allow Indifferences, Submission \#269]{Submission \#269}

\author{Théo Delemazure}
\affiliation{%
	\institution{ILLC, University of Amsterdam}
	\country{Netherlands}
}
\email{theo.delemazure@uva.nl}

% Abstract. Note that this must come before \maketitle.
\begin{abstract}
	{\large Manuscript: January 2026}
	
	\bigskip
	\noindent
    In an ordinal election, two candidates are said to be perfect clones if every voter ranks them adjacently. The independence of clones axiom then states that removing one of the two clones should not change the election outcome. This axiom has been extensively studied in social choice theory, and several voting rules are known to satisfy it (such as IRV, Ranked Pairs and Schulze). However, perfect clones are unlikely to occur in practice, especially for political elections with many voters. 
    
    In this work, we study different notions of \emph{approximate clones} in ordinal elections. Informally, two candidates are approximate clones in a preference profile if they are close to being  \emph{perfect} clones. We discuss two measures to quantify this proximity, and we show under which conditions the voting rules that are known to be independent of clones are also independent of approximate clones. In particular, we show that for elections with at least four candidates, none of these rules are independent of approximate clones in the general case. However, we find a more positive result for the case of three candidates. Finally, we conduct an empirical study of approximate clones and independence of approximate clones based on three real-world datasets: votes in local Scottish elections, votes in mini-jury deliberations, and votes of judges in figure skating competitions. We find that approximate clones are common in some contexts, and that the closest two candidates are to being perfect clones, the less likely their removal is to change the election outcome, especially for voting rules that are independent of perfect clones.
\end{abstract}

\begin{document}

\begin{titlepage}
\maketitle
\vspace{10pt}
\hrule
\vspace{10pt}
\tableofcontents
\vspace{-18pt}
\hrule
\end{titlepage}

\section{Introduction}

A major issue in many electoral systems is what is called the \emph{spoiler effect}: one of the candidates that lost the election could have won  if another candidate (the spoiler) had not run  in the election. A typical example is the single-winner voting rule \emph{plurality}, in which every voter casts a vote for one candidate and the candidate with the highest score wins: if several candidates from the same side of the political spectrum are running, voters from this side are split between these candidates, which might harm their chances  to win. Sometimes, the spoiler effect can even be mutual and two candidates could each have won  individually if the other one had  withdrawn. 

To capture this effect, \citet{tideman1987independence} introduced the axiom of \emph{independence of clones} for elections based on ordinal preferences (i.e., in elections in which every voter casts a ranking of the candidates). In this ordinal model, we say that a set of candidates are \emph{clones} if every voter ranks them consecutively in their ranking. Then, the independence of clones axiom states that when such clones exist, removing all of them but one should not change the winner of the election (unless one of the removed clones is the winner, in which case the remaining clone should become the new winner). This axiom has been extensively studied and discussed in the social choice literature since then, particularly for single-winner and multi-winner voting, and has been adapted to other preference formats, such as approval preferences. With ordinal preferences, it has been shown that independence of clones is  satisfied by only a few single-winner voting rules (in particular Instant Runoff Voting, Ranked Pairs, and Schulze's rule) and multi-winner voting rules (in particular the Single Transferable Vote). Some of these rules are used in several countries for large-scale political elections. Notably, Instant Runoff Voting and Single Transferable Vote are used in Ireland, Australia, and some US states.

This being said, \emph{perfect} clones are unlikely to occur in any large-scale political election, as the number of voters is very high and in practice, some voters have unconventional preferences. Nevertheless, this axiom \emph{a priori} informs us about the expected behavior of a voting rule in presence of \emph{approximate} clones: if a rule is \emph{independent of clones}, then we can expect that it is less sensitive to the spoiler effect in general, in particular for approximate clones. But in practice, can we expect to find such \emph{approximate} clones? How should the proximity to being clones be defined for a pair of candidates? Moreover, do the rules that satisfy independence of clones also satisfy independence of approximate clones, if these candidates are close enough to being perfect clones? We aim to answer these questions in this work using two different notions of approximate clones, by studying the theoretical and empirical behavior of voting rules in presence of such approximate clones.

\subsection{Our contributions}

We first recall the model of ordinal preferences, the definition of clones and of the independence of clones axiom, and some single-winner voting rules satisfying this axiom in \Cref{sec:prelim}. Although clones can be sets of candidates of any size in the original definition, we focus on \emph{pairs} of candidates for simplicity in this work. However, it is clear that all negative results hold for larger sets of candidates, and our main positive result (\Cref{thm:irv-ranked-pairs-alpha-clones}) holds only for preference profiles with three candidates and thus only applies to pairs of approximate clones.

In \Cref{sec:approx}, we formally define the two notions of approximate clones that are used in this paper. These notions aim at quantifying how close a pair of candidates is to being clones. The first one, which we call $\alpha$-deletion clones, is based on the proportion $\alpha$ of voters that need to be removed from the profile for two candidates to become clones.\footnote{This notion is equivalent to the notion of \emph{MaxClones} already introduced by \citet{janeczko2024discovering}, and to the notion of \emph{independent clones} discussed by \citet{faliszewski2025identifying}.} The second one, which we call $\beta$-swap clones, is based on the number of swaps of adjacent candidates in voters' rankings that are required for the two candidates to be clones (here, $\beta$ is the average number of swaps required per voters). Note that $\alpha=\beta=0$ for perfect clones. 

In \Cref{sec:indep}, we show under which conditions and for which values of $\alpha$ and $\beta$ the voting rules that are known to be independent of perfect clones are also independent of approximate clones. Notably, in the general case and when there are four candidates or more, we show that none of these rules satisfy (weak) independence of approximate clones for any value of $\alpha > 0$ or $\beta >0$. However, we show that when there are three candidates, there are values of $\alpha$ for which they satisfy a weak version of the axiom.

Then, in \Cref{sec:real} we investigate approximate clones in real-world datasets of preferences. More specifically, we measure the proximity of the pairs of candidates to being clones (using our two notions), and study how different voting rules react to the deletion of approximate clones in practice. This analysis is based on three datasets: actual votes in local Scottish elections, rankings of judges in skating competitions, and votes of participants in deliberation experiments. We conclude and discuss potential future work in \Cref{sec:conclu}.

\subsection{Related Work}

One of the closest works to ours is the one of \citet{janeczko2024discovering}, who study among other problems the computational complexity of identifying approximate clones in preference profiles, using a notion equivalent to our $\alpha$-deletion clones measure. They also compute the proximity to being clones of pairs of candidates in synthetic preference profiles (from the map of elections \citep{szufa2020drawing}) and in real-world data. \citet{faliszewski2025identifying} continued this work by studying the parametrized complexity of this problem with respect to some natural parameters. They also consider another measure of proximity to being clones, based on the maximal distance between the two candidates in voters' rankings, which is different from the two notions we consider in this work.\footnote{All the negative results we obtain in this paper also hold for their measure (by using the same counter-examples), while our positive results do not hold.} The problem of identifying approximate clones has also been studied by \citet{DelemazureEtAl2026ApproximateClones} in the approval setting.

\citet{procaccia2025clone} also consider some notion of approximate clones for the problem of AI alignment, where the approximate clones are alternatives with similar features. In that sense, this work is also related to the literature on differential privacy \citep{dwork2006differential}. 
\citet{elkind2012clone} propose a richer structure than the notion of clones that captures clone sets of all possible sizes by means of PQ-tree. This notion was then used by \citet{berker2025independence} to define a distance between candidates which corresponds to the size of the smallest clone set containing both candidates, and which could be seen as another notion of approximate clones.

More generally, the independence of clones axiom has been studied in various contexts and has inspired a rich body of research in social choice theory. The notion, originally introduced in the context of single-winner voting with ordinal preferences \citep{tideman1987independence}, has later been generalized to weak orders \citep{schulze2011new,holliday2023split} and to approval preferences \citep{brandl2022approval}. Some variants have also been proposed and discussed for the multi-winner setting \citep{woodall1994properties,baumeister2024multiwinner}. Independence of clones is also related to the more demanding axiom of composition consistency \citep{laffond1996composition}, which additionally imposes which of the clones should win when a clone set is present in the election.

Finally, on the weakening of axioms through quantitative measures, \citet{bardal2025proportional} conduct a similar study of proportionality axioms for committee elections. 
\citet{delemazure2023selecting} introduce measures of how much conflict pairs of candidates induce, which can be seen as an opposite goal to what we are aiming for in this work (in particular, the \emph{discrepancy} measure they introduce can be linked to our notion of $\beta$-swap clones).
%%%%%%%%%%%%%%%%%%%%%%%%%%%%%%%%%%%%%%%%%%%%%%%%%%%%%%%%%%%%%%%%%%%%%%%%

\section{Preliminaries} \label{sec:prelim}

Let $V = \{1,\dots,n\}$ be the set of voters and $C = \{c_1,\dots,c_m\}$ be the set of candidates. 
An ordinal preference profile is a collection $P = (\succ_1, \dots, \succ_n)$ where $\succ_i$ is a ranking (a strict total order) on $C$ representing the preferences of voter $i$ over the candidates. Given a ranking $\succ_i$, we denote by $\sigma_i(x)$ the position of candidate $x$ in $\succ_i$ (e.g., $\sigma_i(x) = 1$ if $x$ is ranked first). For a preference profile $P$ and a candidate $x \in C$, we denote by $P_{-x}$ the preference profile obtained by removing candidate $x$ from every ranking in $P$ (i.e.,  by projecting the profile onto $C \setminus \{x\}$).

A single-winner \emph{voting rule} $f$ is a function that takes as input an ordinal preference profile $P$ and returns a non-empty set of (tied) winners $f(P) \subseteq C$. We consider here \emph{irresolute} voting rules, i.e., rules that may return several winners in case of a tie.

In an ordinal profile $P$, we say that two candidates $x$ and $y$ are clones if they are ranked next to each other in every voter's ranking. More formally, a set $\{x,y\} \subseteq C$ with $x \ne y$ is a pair of clones if for every voter $i$ and $z \in C \setminus \{x,y\}$, we have $x \succ_i z$ if and only if $y \succ_i z$ (or equivalently, if $|\sigma_i(x) - \sigma_i(y)| = 1$ for every voter $i$). Note that this definition could be generalized to larger sets of clones, but we focus here on pairs of candidates for the reasons mentioned in the introduction.

Then, we say that a voting rule $f$ satisfies \emph{independence of clones} if adding or removing a clone of a candidate does not change (significantly) the result of the election. 
More formally, we have the following.

\begin{definition}
\label{def:independence-of-clones}
A voting rule $f$ satisfies \emph{independence of clones} if for every preference profile $P$ in which $x$ and $y$ are clones, we have (1) for every $z \in C \setminus \{x,y\}$, $z \in f(P)$ if and only if $z \in f(P_{-x})$ and (2)  $f(P) \cap \{x,y\} \ne \emptyset$ if and only if $y \in f(P_{-x})$.
\end{definition}

The first condition states that all non-clone winners remain winners after removing one of the clones, and that no new non-clone winner appears. The second condition states that if one of the clones is a winner, then the remaining clone should win after removing the other clone, and conversely, if none of the clones is a winner, then removing one of them should not make the other one win.

This axiom was first introduced by \citet{tideman1987independence}, who showed that the \emph{Instant Runoff Voting (IRV)} rule \citep{smith1973aggregation} satisfies this axiom, and introduced another rule called \emph{Ranked Pairs} that also satisfies it.\footnote{\citet{zavist1989complete} actually showed that in its original definition, Ranked Pairs failed independence of clones, and proposed a refinement of the rule that satisfies it. We refer to the work of \citet{berker2025independence} (Section 3.1) for more details.} Most of the other well-studied ordinal voting rules do not satisfy independence of clones. A notable exception is \emph{Schulze}'s rule \citep{schulze2011new}. We now define these rules more formally.

\begin{definition}
    \label{def:instant-runoff-voting}
    The \emph{Instant Runoff Voting} (IRV) rule  proceeds by eliminating candidates sequentially. At each round,  the candidate with the fewest first-place votes among non-eliminated candidates is eliminated. The last  remaining candidate is the winner.
\end{definition}

The next two rules (Ranked Pairs and Schulze) are based on the pairwise majority graph. In other words, their definitions rely on the notion of margin of victory, which is defined for a pair of candidates $(x,y) \in C^2$ as $M_{x,y} = n_{x \succ y} - n_{y \succ x}$ where $n_{x \succ y} = |\{i \in V \mid x \succ_i y\}|$ is the number of voters that prefer $x$ to $y$.

\begin{definition}
    \label{def:ranked-pairs}
    The \emph{Ranked Pairs} rule starts by sorting the pairs of candidates $(x,y) \in C^2$ by decreasing order of their margin of victory $M_{x,y}$. Then, the rule constructs a domination graph by adding the pairs in this order (if possible). For each pair $(x,y)$, it adds a directed edge from $x$ to $y$ as long as this does not create a cycle (in this case, the pair is skipped). The resulting graph defines a ranking of the candidates. The winner is the candidate at the top of this ranking.
\end{definition}

The next voting rule is based on the notion of \emph{path strength} in the pairwise majority graph. A path is a sequence of $k \ge 2$ candidates $\pi = (c_1, \dots, c_k)$. The \emph{strength} of the path is the minimum margin of victory along the path, i.e., $S_{\pi} = \min_{i=1}^{k-1} M_{c_i,c_{i+1}}$. 

\begin{definition}
    \label{def:schulze}
    The \emph{Schulze} rule is defined as follows. For every pair of candidates $(x,y) \in C^2$, we define the \emph{beatpath strength} $S_{x,y}$ to be the maximum strength over all paths from $x$ to $y$. Schulze's rule elects the candidate $x$ that beats every other candidate $y$ in beatpath strength, i.e., the candidate $x$ such that for every $y \in C \setminus \{x\}$, we have $S_{x,y} \ge S_{y,x}$. Such a candidate is guaranteed to exist.
\end{definition}

Note that these two rules are \emph{Condorcet rules}, i.e., they elect the Condorcet winner if it exists. As a reminder, a Condorcet winner $x \in C$ is a candidate that beats every other candidate in pairwise comparison, i.e., for every $y \in C \setminus \{x\}$, we have $M_{x,y} > 0$.

In case of ties at any step of these three rules, we consider all possible resolutions, and the set of winners is the union of the winners from all possible tie-broken outcomes. Note that this way of breaking ties is known as \emph{parallel-universe tie-breaking} \citep[Section 7]{conitzer2009preference}. However, in this work we will generally use examples without ties to simplify the presentation.

\section{Approximate Clones} \label{sec:approx}

We now introduce different notions of approximate clones, which quantify how close to being clones are two candidates in a given preference profile.

We say that two candidates $x$ and $y$ are \emph{$\alpha$-deletion clones} if we can remove at most $\alpha \cdot n$ voters from the election such that the remaining voters consider $x$ and $y$ to be clones. In other words, we can find a proportion $1-\alpha$ of voters for whom the two candidates are clones. More formally, we have the following.

\begin{definition}
    \label{def:alpha-clones}
    Given a preference profile $P$, we say that a pair of candidates $\{x,y\} \subseteq C$ with $x \ne y$ is a pair of \emph{$\alpha$-deletion clones} if:
    \[
    \alpha \ge \frac{1}{n}\left | \left \{i \in V \mid |\sigma_i(x) - \sigma_i(y)| > 1 \right \} \right |.
    \]
\end{definition}

For a pair of candidates, we generally consider the smallest value $\alpha$ for which they are $\alpha$-deletion clones. Note that this notion is equivalent to the notion of \emph{MaxClones} introduced by \citet{janeczko2024discovering}, and to the notion of \emph{independent clones} discussed by \citet{faliszewski2025identifying}.

One issue with this notion is that it focuses only on the voters for which the two candidates are clones, and ignores the voters for which they are not. In particular, the fact that two candidates are close in some rankings (without necessarily being adjacent) should also be taken into account. For instance, consider a first profile $P_1$ in which 70\% of voters rank $x$ and $y$ next to each other and the remaining 30\% rank them respectively first and last in their ranking (so with a rank difference of $m-1$), and a second profile $P_2$ in which 60\% of voters rank $x$ and $y$ next to each other, but the remaining 40\% rank them respectively first and third in their ranking (so with a rank difference of only 2). One could argue that $x$ and $y$ are closer to be clones in the second profile than in the first one, even though they are perfect clones for a larger share of voters in the first profile. This motivates the notion of \emph{$\beta$-swap clones}. Intuitively, a pair $\{x,y\}$ of candidates is a pair of $\beta$-swap clones if the total number of adjacent swaps required to make them perfect clones for all voters is at most  $\beta \cdot n$. More formally, we have the following.

\begin{definition}
    \label{def:beta-swap-clones}
    Given an ordinal preference profile $P$, we say that a pair of candidates $\{x,y\} \subseteq C$ with $x \ne y$ is a pair of \emph{$\beta$-swap clones} if:
    \begin{align*}
    \beta \ge \frac{1}{n}\sum_{i \in V} \left (|\sigma_i(x) - \sigma_i(y)| - 1\right ).
    \end{align*}
\end{definition}

Again, for a pair of candidates, we generally consider the smallest value $\beta$ for which they are $\beta$-swap clones. Note that when $\alpha = 0$ or $\beta = 0$, we recover the classical notion of perfect clones, and that when $m=3$, the two notions are equivalent (as in this case, we have $|\sigma_i(x) - \sigma_i(y)| - 1 \in \{0,1\}$ for all voters).

In what follows, we encompass both notions under the term \emph{approximate clones}. The following example illustrates how these two notions can differ in practice.

\begin{example}
    \label{ex:alpha-beta-clones}
    Consider the following preference profile $P$ with $n=10$ voters and $m=100$ candidates $\{a_1, \dots, a_{100}\}$:
        \begin{align*}
            5:~&  a_2 \succ a_1 \succ a_3 \succ \dots \succ a_{100} \\
            4:~& a_1 \succ a_2 \succ a_3 \succ \dots \succ a_{100}\\
            1:~& a_1 \succ a_3 \succ a_{100} \succ \dots \succ a_2 
        \end{align*}
        In this profile, candidates $a_1$ and $a_2$ are $0.1$-deletion clones, since only one voter out of ten does not rank them consecutively. However, this voter puts them very far away in their ranking, which makes them $9.8$-swap clones (98 swaps required in the last ranking). On the other hand, candidates $a_1$ and $a_3$ are $0.4$-deletion clones (4 voters do not rank them consecutively), but they are only $0.4$-swap clones (4 swaps required in total).
\end{example}

To get a better intuition of these two notions of approximate clones, let us now consider the case of the characteristic preference profiles that have been studied by \citet{faliszewski2023diversity}: the \emph{identity profile}, the \emph{antagonistic profile} and the \emph{uniform profile}.

In the \emph{identity profile}, all voters have the same ranking $\succ$, so two candidates are either $0$-deletion clones or $1$-deletion clones, depending on whether or not they are consecutive in $\succ$. The same is true in the \emph{antagonistic profile}, in which half of the voters have the same ranking $\succ$ and the other half have the reverse ranking $\overleftarrow{\succ}$. In both the identity and antagonistic profiles, the value of $\beta$ for which two candidates $x$ and $y$ are $\beta$-swap clones is equal to $|\sigma(x) - \sigma(y)| - 1$, where $\sigma$ is the position function associated to $\succ$.

However, in the \emph{uniform profile} in which every ranking appears exactly once (so the profile contains exactly $m!$ rankings), every pair of candidates is a pair of $\alpha$-deletion clones for the same value of $\alpha = (m-2)/m$. Similarly, every pair of candidates is a pair of $\beta$-swap clones for $\beta = (m-2)/3$. This can be deduced from the fact that the average values of $\alpha$ and $\beta$ over all pairs of candidates in any profile are respectively  $(m-2)/m$ and $(m-2)/3$, and by neutrality of the uniform profile, every pair of candidates has the same value of $\alpha$ and $\beta$.

\begin{proposition} \label{prop:average-alpha-beta}
    In any preference profile, the average value of $\alpha$ for which two candidates are $\alpha$-deletion clones over all pairs of candidates is equal to $(m-2)/m$, and the average value of $\beta$ for which two candidates are $\beta$-swap clones over all pairs of candidates is equal to $(m-2)/3$.
\end{proposition}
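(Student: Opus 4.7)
The plan is a double-counting argument: I would swap the order of summation, summing first over voters for a fixed pair and then over pairs for a fixed voter, and observe that the per-voter contribution depends only on $m$, not on the voter's actual ranking.

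Concretely, fix a voter $i$ with position function $\sigma_i$. Since $\sigma_i$ is a bijection onto $\{1, \dots, m\}$, for every distance $d \in \{1, \dots, m-1\}$ there are exactly $m - d$ pairs $\{x,y\}$ with $|\sigma_i(x) - \sigma_i(y)| = d$. From this I would derive two identities. For the $\alpha$-deletion case,
\[
\sum_{\{x,y\}} \mathbf{1}\!\left[|\sigma_i(x) - \sigma_i(y)| > 1\right] = \sum_{d=2}^{m-1} (m-d) = \frac{(m-1)(m-2)}{2}.
\]
For the $\beta$-swap case,
\[
\sum_{\{x,y\}} \bigl(|\sigma_i(x) - \sigma_i(y)| - 1\bigr) = \sum_{d=1}^{m-1} (m-d)(d-1) = \frac{m(m-1)(m-2)}{6},
\]
where the second equality is a routine binomial identity (it is $6\binom{m}{3}$ by a stars-and-bars argument, or a direct telescoping).

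I would then take the minimum $\alpha$ and $\beta$ for each pair as their value. Swapping the sums and using that each of the $n$ voters contributes the same per-voter total gives
\[
\sum_{\{x,y\}} \alpha(\{x,y\}) = \frac{1}{n} \sum_{i \in V} \frac{(m-1)(m-2)}{2} = \frac{(m-1)(m-2)}{2},
\]
and analogously $\sum_{\{x,y\}} \beta(\{x,y\}) = \frac{m(m-1)(m-2)}{6}$. Dividing by $\binom{m}{2} = \frac{m(m-1)}{2}$ yields the announced averages $(m-2)/m$ and $(m-2)/3$.

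There is essentially no obstacle here — the content is entirely the per-voter counting. The only subtlety worth flagging in the write-up is that we are averaging the \emph{minimum} $\alpha$ (resp. $\beta$) for which a pair qualifies, which is exactly the right-hand side of \Cref{def:alpha-clones} (resp. \Cref{def:beta-swap-clones}) taken with equality, so the sum of minima is unambiguously the swapped double sum above.
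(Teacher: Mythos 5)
Your proof is correct and is essentially the paper's own argument: both swap the sums over voters and pairs and use that the per-voter total (number of non-adjacent pairs, resp.\ total separation) depends only on $m$ — the paper just counts adjacent pairs and takes the complement where you count non-adjacent pairs directly, and it re-indexes the swap sum by positions where you group by distance $d$. One trivial slip in your aside: $\sum_{d=1}^{m-1}(m-d)(d-1)$ equals $\binom{m}{3}$, not $6\binom{m}{3}$, though the displayed value $m(m-1)(m-2)/6$ that you actually use is correct.
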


\begin{proof}
    Let us first prove that the average value of $\alpha$ for which two candidates are $\alpha$-deletion clones is $\alpha= (m-2)/m$.
    Let $k_{a,b}$ be the number of voters that rank $a$ and $b$ adjacently for $a\ne b$. For one voter, there exist exactly $m-1$ pairs of distinct candidates that are next to each other in their ranking. Thus, the sum of the $k_{a,b}$ over all voters is equal to $n\cdot (m-1)$ and the average value $\tilde{k} = \frac{2}{m(m-1)} \sum_{a,b} k_{a,b}$ over all pairs of candidates is equal to $\tilde{k} = \frac{2n(m-1)}{m(m-1)} = \frac{2n}{m}$. Therefore, the average value $\tilde{\alpha} = \frac{2}{m(m-1)} \sum_{a,b} \alpha_{a,b}$ (where $\alpha_{a,b}$ is the minimal value for which $a$ and $b$ are $\alpha$-deletion clones) is equal to $(n-\tilde{k}) / n = (m-2)/m$.
    
        Let us now prove that the average value $\tilde{\beta}$ for which two candidates are $\beta$-swap clones in any preference profile is equal to $(m-2)/3$. For a single voter, let us calculate the total number of swaps needed to make every pair of candidates adjacent. This is the sum of $|\sigma(x) - \sigma(y)| - 1$ over all pairs $\{x,y\}$ of distinct candidates. Consider a candidate at position $i$ and another at position $j < i$. They are separated by $i-j-1$ other candidates, so making them adjacent requires $i-j-1$ swaps. The total number of swaps for one voter is the sum of $(i-j-1)$ over all pairs of positions $\{i,j\}$ with $i>j$. This sum can be re-indexed and shown to be equivalent to:
    \begin{align*}
    \sum_{i = 3}^m \sum_{j=1}^{i-2} j 
    = \sum_{i = 1}^{m-2} \sum_{j=1}^{i} j 
    = \sum_{i = 1}^{m-2} \frac{i(i+1)}{2} = \frac{(m-2)(m-1)(m)}{6}.
    \end{align*}
    If we sum over all voters, we obtain $n(m-2)(m-1)(m)/6$. Because there are $m(m-1)/2$ pairs of candidates, the average number of swaps required to turn two candidates into clones is 
    \begin{align*}
    \frac{n(m-2)(m-1)(m)/6}{m(m-1)/2} = \frac{n(m-2)}{3},
    \end{align*}
    giving $\tilde{\beta} = (m-2)/3$.
\end{proof}

\Cref{prop:average-alpha-beta} directly implies that in every preference profile, there exists at least one pair of candidates that are $\alpha$-deletion clones for $\alpha \le (m-2)/m$, and one pair of candidates that are $\beta$-swap clones for $\beta \le (m-2)/3$. This means for instance that for $m=4$, there always exist candidates that are clones for at least half of the voters ($\alpha \le 1/2$). This also sets our expectations on what values of $\alpha$ and $\beta$ to expect in a perfectly random environment.

\section{Independence of Approximate Clones} \label{sec:indep}

Motivated by the idea that perfect clones rarely occur in practice, we are now interested in assessing whether voting rules that are independent of perfect clones are also independent of approximate clones. 
We say that a rule is independent of approximate clones (either $\alpha$-deletion clones or $\beta$-swap clones) if we can safely remove any of the approximate clones without changing the outcome of the election, similarly to the case of perfect clones.

\begin{definition}
    \label{def:indep-alpha-clones}
    We say that a voting rule $f$ is independent of $\alpha$-deletion clones (resp. of $\beta$-swap clones) if for each possible profile $P$ and two candidates $x,y\in C$ that are $\alpha$-deletion clones (resp. $\beta$-swap clones) in $P$, we have (1) for all $z \in C \setminus \{x,y\}$, $z \in f(P)$ if and only if $z \in f(P_{-x})$, and (2) $f(P) \cap \{x,y\} \ne \emptyset$ if and only if $y \in f(P_{-x})$.
\end{definition}

Note that the roles of $x$ and $y$ are symmetric in this definition. Independence of $0$-deletion clones (and of $0$-swap clones) corresponds to the classical axiom of independence of clones. On the other hand, independence of $1$-deletion clones is  much stronger than the well-known (and already very strong) axiom of \emph{independence of losers}, which is the single-winner voting version of the \emph{independence of irrelevant alternatives} axiom \citep{arrow1950difficulty} and which states that the outcome of the election should not change when we remove any candidate that is not a winner.\footnote{Note that for $\alpha=1$, any pair of candidates satisfies the condition, meaning if we remove any winning candidate, all other candidates must become winners.}

Unfortunately, a simple example is enough to show that any reasonable rule fails this axiom for any $\alpha > 0$ or $\beta > 0$.

\begin{theorem} \label{thm:indep-alpha-clones-imp}
    Any voting rule that coincides with the majority rule when $m=2$ fails independence of $\alpha$-deletion clones for any $\alpha> 0$ and independence of $\beta$-swap clones for any $\beta > 0$.
\end{theorem}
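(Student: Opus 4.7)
The plan is to construct, for any target $\alpha > 0$ (and similarly $\beta > 0$), a single three-candidate profile $P$ in which $\{x,y\}$ are $\alpha$-deletion (resp.\ $\beta$-swap) clones but the axiom, applied in both directions, forces contradictory memberships for the third candidate $z$ in $f(P)$. The engine of the contradiction is a Condorcet cycle $x \succ y \succ z \succ x$ with arbitrarily small non-adjacency of $\{x,y\}$: then by the hypothesis on $f$, $f(P_{-x})$ is the majority winner of $\{y,z\}$, namely $\{y\}$, and $f(P_{-y})$ is the majority winner of $\{x,z\}$, namely $\{z\}$.

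Concretely, I start from the three-voter cycle
\[
x \succ y \succ z, \qquad z \succ x \succ y, \qquad y \succ z \succ x,
\]
where only the third voter has $z$ between $x$ and $y$, and each pairwise margin is $1$. To make $\alpha$ and $\beta$ as small as desired, I add $k$ copies of $x \succ y \succ z$ together with $k$ copies of its reverse $z \succ y \succ x$. Both keep $x,y$ adjacent, and their pairwise contributions cancel exactly, so every margin is preserved. The resulting profile has $n = 2k+3$ voters and still contains only a single voter with $x,y$ non-adjacent, hence $\{x,y\}$ are $\alpha$-deletion and $\beta$-swap clones with $\alpha = \beta = 1/(2k+3)$ (using that the two measures coincide when $m = 3$). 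Choosing $k$ large enough places both quantities below any prescribed positive threshold.

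Finishing the argument is then essentially automatic. Since $f$ coincides with the majority rule when only two candidates remain, $f(P_{-x}) = \{y\}$ and $f(P_{-y}) = \{z\}$. Condition~(1) of \Cref{def:indep-alpha-clones}, applied to the ordered pair $(x,y)$, gives $z \in f(P) \Longleftrightarrow z \in f(P_{-x}) = \{y\}$, forcing $z \notin f(P)$. By the symmetric role of the two clones noted after the definition, applying condition~(1) to $(y,x)$ gives $z \in f(P) \Longleftrightarrow z \in f(P_{-y}) = \{z\}$, forcing $z \in f(P)$. The two requirements contradict each other, so $f$ cannot satisfy independence of $\alpha$-deletion clones, and the identical argument handles $\beta$-swap clones.

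The only real design choice, and the step I expect to be the "hard" one, is identifying padding voters that simultaneously (i)~preserve all three pairwise margins of the base cycle and (ii)~keep $\{x,y\}$ adjacent so that the non-adjacency count stays at one; the pair $\{x \succ y \succ z,\; z \succ y \succ x\}$ is specifically tailored to do both, and once it is in hand the rest of the proof is pure bookkeeping.
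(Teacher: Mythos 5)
Your proof is correct, but it takes a genuinely different route from the paper's. The paper pads the profile $\{k\colon a \succ b \succ c,\ k\colon c \succ b \succ a,\ 1\colon a \succ c \succ b\}$, in which $b$ and $c$ are \emph{perfect} clones and $a,b$ are $1/(2k+1)$-deletion clones; it then runs a case analysis on which candidate lies in $f(P)$ and derives a contradiction from condition~(2) of the definition (the surviving clone must win the two-candidate majority vote, but loses it) — so it needs two different clone pairs and the non-emptiness of $f(P)$ to cover all cases. You instead build a Condorcet cycle with all margins equal to $1$ in which only one voter separates $x$ and $y$, and you get the contradiction purely from condition~(1) applied to the single approximate-clone pair $\{x,y\}$ in both directions: $f(P_{-x})=\{y\}$ forces $z \notin f(P)$ while $f(P_{-y})=\{z\}$ forces $z \in f(P)$. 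This is cleaner in that it needs no case analysis and never touches condition~(2), and it shows the failure is attributable entirely to a single pair of approximate clones rather than partly to a pair of perfect clones; your appeal to the symmetry of the roles of $x$ and $y$ is legitimate since the paper states it explicitly after the definition, and your reduction of the $\beta$-swap case to the $\alpha$-deletion case via the $m=3$ equivalence matches the paper's. All the arithmetic (margins all equal to $1$ after adding the mutually cancelling pair $x \succ y \succ z$ and $z \succ y \succ x$, and $\alpha=\beta=1/(2k+3)$) checks out.
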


\begin{proof}
    Let $f$ be a rule satisfying the majority rule when $m=2$, and which is independent of $\alpha$-deletion clones for some $\alpha > 0$. Consider the following profile $P$:
    \begin{align*}
        k:~& a \succ b \succ c&
        k:~&c \succ b \succ a& 
        1:~ &a \succ c \succ b
    \end{align*}
    
    In this profile, $b$ and $c$ are perfect clones (and thus $\alpha$-deletion clones) and $a$ and $b$ are $1/(2k+1)$-clones, which tends to 0 as $k$ tends to infinity. Thus, for any $\alpha > 0$, there exists $k$ such that $1/(2k+1) < \alpha$. 
    Let us assume that $a \in f(P)$. By independence of $\alpha$-deletion clones applied to the pair $\{a,b\}$, we should have $b \in f(P_{-a})$. However, $c$ wins the majority vote against $b$ in $P_{-a}$, a contradiction. If $b \in f(P)$, by independence of $\alpha$-deletion clones applied to the pair $\{b,c\}$,we should have $c \in f(P_{-b})$, which again contradicts the majority vote. Finally, if $c \in f(P)$, we use the same argument by reversing the roles of $b$ and $c$.

    Since both measures of approximate clones are equivalent when $m=3$, the same proof works for $\beta$-swap clones.
\end{proof}

In particular, IRV, Ranked Pairs and Schulze's rule fail this property. It is not too surprising that we obtain such a negative result: the voters for whom the two candidates are not clones might put one of the clones first and the other one last (like in the example from the proof above). In this case, if the first clone (the one ranked first) is the winner, it may be too demanding to ask that when we remove it, the other clone (who is ranked last by some voters) should replace it.  

\subsection{Weak independence of approximate clones}

A weaker and perhaps easier to satisfy version of this axiom is to ask that at least one of the two approximate clones can be removed without changing the outcome of the election. Here, only one of the clones is treated as the potential spoiler, and its presence (or absence) should not impact the result.

\begin{definition}
    We say that a voting rule $f$ is \emph{weakly} independent of $\alpha$-deletion clones (resp. of $\beta$-swap clones) if for any profile $P$ and two candidates $x,y\in C$ that are $\alpha$-deletion clones (resp. $\beta$-swap clones) in $P$, the following holds for \emph{either} $P' = P_{-x}$ or $P' = P_{-y}$: (1) for all $z \in C \setminus \{x,y\}$, $z \in f(P)$ if and only if $z \in f(P')$ and (2) $f(P) \cap \{x,y\} \ne \emptyset$ if and only if $f(P') \cap \{x,y\} \ne \emptyset$.
\end{definition}

Clearly, independence of $\alpha$-deletion clones implies weak independence of $\alpha$-deletion clones. One could think of intermediate axioms, but we will see that even our weak notion leads to quite negative results, which reduce the need for further strengthening.

Note that weak independence of clones prevents a mutual spoiler effect (i.e., cases in which each of the two clones wins when running alone, but both clones lose when they both run) but it does not prevent all kinds of spoiler effect. For instance, weak independence allows for a one-way spoiler effect: It could be that  $x$ wins when running alone, but both $x$ and $y$ lose when running together, and the axiom would still be satisfied as long as $y$ also loses when running alone (as in the counterexample in the proof of \Cref{thm:indep-alpha-clones-imp}). Unfortunately, this phenomenon can happen with IRV, Ranked Pairs and Schulze's rule for any $\alpha > 0$.

In the remainder of this section, we will assess whether some rules that are known to be independent of (perfect) clones are also weakly independent of $\alpha$-deletion clones or $\beta$-swap clones. We can first show that when $m \ge 4$, they are not weakly independent of $\alpha$-deletion clones and of $\beta$-swap clones for any $\alpha > 0$ or $\beta > 0$.

\begin{theorem}
\label{thm:irv-alpha-clones}
    For $m \ge 4$, IRV, Ranked Pairs and Schulze are not weakly independent of $\alpha$-deletion clones (resp. $\beta$-swap clones) for any $\alpha > 0$ (resp. $\beta > 0$).
\end{theorem}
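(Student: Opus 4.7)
The plan is to exhibit, for each of IRV, Ranked Pairs, and Schulze, and for any target $\alpha > 0$ (resp. $\beta > 0$), a preference profile $P$ over $m = 4$ candidates $\{a,b,c,d\}$ in which $b$ and $c$ form a pair of $\alpha$-deletion (resp. $\beta$-swap) clones but both $P_{-b}$ and $P_{-c}$ violate condition (1) of weak independence. The extension to arbitrary $m \ge 4$ is then routine: append $m - 4$ dummy candidates at the bottom of every ranking. These are eliminated first by IRV, are Condorcet-dominated by every other candidate so that they do not appear in the winning structure of Ranked Pairs or Schulze, and the $\alpha$- and $\beta$-indices of the pair $\{b,c\}$ are unchanged.

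I would use the same three-step template for each rule, instantiated separately. First, build a core profile $P^\star$ on $\{a,b,c,d\}$ in which $b$ and $c$ are perfect clones and the rule's internal machinery sits on a tie---either between two candidates for elimination in IRV, or between two pairs of equal margin in Ranked Pairs, or between two beatpaths of equal strength in Schulze. For Ranked Pairs and Schulze, $P^\star$ must also have no Condorcet winner: otherwise the Condorcet winner survives the removal of either clone and the outcome is automatically preserved, so I would engineer a three-way cycle involving $a$, $d$, and the clone block $\{b,c\}$. Second, add a small fixed bloc $P^\dagger$ of deviating voters, each placing exactly one other candidate between $b$ and $c$ in their ranking, and thus contributing $1$ to both the $\alpha$-deletion count and the $\beta$-swap sum. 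The bloc $P^\dagger$ is chosen asymmetrically---some voters insert $a$ between $b$ and $c$, others insert $d$---so that the critical ties are resolved in three different ways across $P = P^\star + P^\dagger$, $P_{-b}$, and $P_{-c}$, producing specific winners $w$, $w_1 \ne w$, and $w_2 \ne w$ respectively. Third, scale the core by an integer factor $N$, setting $P_N := N \cdot P^\star + P^\dagger$: any equality of counts or margins in $P^\star$ remains an equality after multiplication by $N$, so $P^\dagger$ resolves the same ties the same way for every $N$, and the winners $w, w_1, w_2$ are invariant in $N$. Meanwhile the deviation ratio $|P^\dagger|/(N|P^\star| + |P^\dagger|)$ and the corresponding $\beta$-swap average tend to $0$ as $N \to \infty$, so for any $\alpha, \beta > 0$ we take $N$ large enough to make $\{b,c\}$ a pair of $\alpha$-deletion and $\beta$-swap clones.

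The delicate step is the asymmetric perturbation. Weak independence (unlike full independence) requires that both $P_{-b}$ and $P_{-c}$ fail to preserve the outcome, whereas a naive symmetric deviation typically affects $P_{-b}$ and $P_{-c}$ as mirror images, and the effects cancel out. The two-sided deviation idea is what breaks this symmetry: removing $b$ turns one type of deviating voter into a supporter of $a$ while removing $c$ turns the other type into a supporter of $d$ (or vice versa), allowing the tipped elimination round of IRV---or the tipped tie in the ranked-pair construction / beatpath computation---to push the winner in genuinely different directions depending on which clone is removed. Designing the core profile together with $P^\dagger$ for each rule, so that its tie-breaking mechanism dovetails with the rule's winner-selection and the three winners $w, w_1, w_2$ are all distinct non-clone candidates, is the main technical obstacle; the three rules admit closely related but not identical instantiations of this template.
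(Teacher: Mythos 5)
Your high-level strategy is exactly the one the paper uses: a core profile on four candidates in which the two designated candidates are perfect clones, plus a small fixed bloc of deviating voters, with the core scaled by $k$ so that the deviation fraction (and hence $\alpha$ and $\beta$) tends to $0$ while the winners of $P$, $P_{-x}$ and $P_{-y}$ stay fixed; padding with bottom-ranked dummies for $m>4$ is indeed routine. However, what you have written is a plan, not a proof. The entire mathematical content of this theorem is the explicit counterexample profiles for each of the three rules, together with the verification that the rule's winner changes under \emph{both} removals, and you explicitly defer this (``designing the core profile together with $P^\dagger$ for each rule \dots is the main technical obstacle''). Without the concrete profiles and the computation of the elimination orders (for IRV) and of the margin matrix, locked pairs, and beatpath strengths (for Ranked Pairs and Schulze), the claim that such profiles exist is unsubstantiated --- and their existence is precisely what the theorem asserts. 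The paper supplies one explicit profile for IRV and a second explicit profile that handles both Ranked Pairs and Schulze simultaneously.

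Two further points in your template would need repair if you carried it out. First, you end by requiring the three winners $w, w_1, w_2$ to be ``all distinct non-clone candidates,'' which is impossible for $m=4$: with two candidates designated as clones there are only two non-clone candidates left. It is also unnecessary. In the paper's constructions the original winner $w$ is one of the \emph{clones}, and after removing either clone a non-clone wins (the same candidate $d$ in both cases for IRV; two different candidates $c$ and $d$ for Ranked Pairs/Schulze); this already violates condition (2) of weak independence for both removals. Second, your description of the perturbation bloc (``some voters insert $a$ between $b$ and $c$, others insert $d$'') does not match what is actually needed for IRV: there the deviating voters all insert the \emph{same} third candidate between the clones, and the asymmetry that defeats both removals comes instead from how the transferred votes after the first elimination differ between $P_{-x}$ and $P_{-y}$. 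These are signs that the template has not been tested against a concrete instance, which is where the real difficulty of this theorem lies.
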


\begin{proof}
    Let $\alpha > 0$ and $k \in \mathbb N$. We start with IRV, and consider the following profile $P$:
    \begin{align*}
        k:~ & a \succ b \succ c \succ d &
        k:~  & b \succ a \succ c \succ d &
        k:~  & d \succ a \succ b \succ c \\
        k+1:~  & c \succ d \succ a \succ b &
        2:~  & a \succ d \succ b \succ c &
        2:~  & b \succ d \succ a \succ c
    \end{align*}
    In this profile, $a$ and $b$ are $4/(4k+5)$-deletion clones, which tends to $0$ when $k$ tends to infinity. Thus, there exists $k$ such that $\alpha > 4/(4k+5)$. With IRV, $d$ is eliminated first, then $c$, then $b$, and finally $a$ wins. By independence of $\alpha$-deletion clones, $a$ should win when we remove $b$ or $b$ when we remove $a$. However, if we remove $a$, then $c$ is eliminated first, giving $k+1$ points to $d$, and $b$ is eliminated next, thus $d$ wins. Similarly, if we remove $b$, $d$ wins. This contradicts weak independence of $\alpha$-deletion clones.

    For Ranked Pairs, consider the following profile $P$:
    \begin{align*}
        k+7:~ & d \succ a \succ b \succ c &
        k:~ & c \succ a \succ b \succ d &
        10:~  & b\succ c \succ a \succ d \\
        2:~  & a \succ b \succ d \succ c &
        3:~ & c \succ a \succ d \succ b & 
        4:~& d \succ c \succ a \succ b 
    \end{align*}
    In this profile, $a$ and $b$ are $13/(2k+26)$-clones, which tends to $0$ when $k$ tends to infinity. Therefore, there exists $k$ such that $\alpha > 13/(2k+26)$. The margins of victory are as follows:
\begin{center}
        \begin{tabular}{c| c c c c}
        & $a$ & $b$ & $c$ & $d$ \\
        \hline
        $a$ & 0 & $2k+6$ & $-8$ & $4$ \\
        $b$ & $-(2k+6)$ & 0 & $12$ & $-2$ \\
        $c$ & $8$ & $-12$ & 0 & $0$ \\ 
        $d$ & $-4$ & $2$ & $0$ & 0         
    \end{tabular}
\end{center}

    The pairs are added in the following order: first $a \succ b$, then $b \succ c$. Then, $c \succ a$ is skipped because it would create a cycle. Then $a \succ d$ and $d \succ b$ are added. Thus, the final ranking is $a \succ d \succ b \succ c$ and the winner is $a$. Now, if we remove $a$, the pairs $b \succ c$ and $d \succ b$ are added, and the winner is $d$. If we remove $b$, the pairs $a \succ d$ and $c \succ a$ are added and the winner is $c$. Thus, a clone wins when both clones are in the election, but not when we remove one, which contradicts weak independence of $\alpha$-deletion clones.

    Using the same profile, we can also show that Schulze's rule is not weakly independent of $\alpha$-deletion clones for $\alpha > 0$. 
    Indeed, $a$ beats every other candidate in terms of beatpath strength, and wins the election. However, if we remove $b$, then $c$ beats $a$ and $d$ in terms of beatpath strength, and wins, and if we remove $a$, $d$ beats $b$ and $c$ in terms of beatpath strength, and wins. Again, this contradicts weak independence of $\alpha$-deletion clones.

    In the profiles above, $a$ and $b$ have the same value of $\beta$-swap clones as of $\alpha$-deletion clones, so the same proof works to show that these rules fail independence of $\beta$-swap clones for $\beta > 0$.
\end{proof}

An open question is now: is there a (sensible) ordinal voting rule that is independent of approximate clones for some $\alpha > 0$ or $\beta > 0$ when $m \ge 4$? 

The situation is different when there are only $m= 3$ candidates. In this case, IRV is weakly independent of $\alpha$-deletion clones for $\alpha \le 1/3$, and Ranked Pairs and Schulze for any $\alpha \ge 0$. Remember that when $m=3$, the two measures are equivalent, thus we have the same results for $\beta$-swap clones. However, the result is only true if we restrict the possible profiles to those for which the rule is resolute (i.e., produces a single winner). For $f$ a voting rule, we call $f$-simple profiles the profiles for which $f$ produces a single winner.

\begin{theorem}
\label{thm:irv-ranked-pairs-alpha-clones}
    For $m=3$, IRV is weakly independent of $\alpha$-deletion clones and of $\beta$-swap clones on IRV-simple profiles for $\alpha \le 1/3$, and not for $\alpha > 1/3$. Ranked Pairs and Schulze are weakly independent of $\alpha$-deletion clones on Ranked Pairs- (resp. Schulze-) simple profiles for any $\alpha \ge 0$.
\end{theorem}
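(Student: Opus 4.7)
I fix the three candidates as $a,b,c$ and designate the clone pair $\{a,b\}$; let $n_1,\dots,n_6$ denote the number of voters with each of the six possible rankings, ordered so that $n_3+n_4$ is exactly the number of voters placing $c$ strictly between $a$ and $b$. The $\alpha$-deletion-clone hypothesis then reads $n_3+n_4 \le \alpha n$. For IRV, I would case-split on which candidate is eliminated first in $P$. Whenever a clone is eliminated first, removing that clone leaves the pairwise comparison between the surviving clone and $c$ untouched, so the IRV winner of $P$ is preserved in $P'$; this immediately settles every configuration except the one in which the non-clone $c$ is eliminated first.

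The interesting sub-case is therefore $c$ eliminated first with, without loss of generality, $a$ winning the runoff against $b$. Weak independence demands that either $b$ strictly beats $c$ in $P_{-a}$ or $a$ strictly beats $c$ in $P_{-b}$. Assuming both fail, the two failed inequalities sum (after cancelling $n_3+n_4$) to $n_1+n_2 \le n_5+n_6$, while the two inequalities witnessing that $c$ was eliminated first sum to $2(n_5+n_6) < n_1+n_2+n_3+n_4$. Writing $X=n_1+n_2$, $Y=n_3+n_4$, $Z=n_5+n_6$ with $X+Y+Z=n$, a short linear-arithmetic argument from $X \le Z$ and $2Z < X+Y$ forces $Y > n/3$, contradicting $\alpha \le 1/3$. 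For the negative direction at $\alpha > 1/3$, I would exhibit the parametric family $n_1=n_2=s-2$, $n_3=n_4=s$, $n_5=2s-3$, $n_6=0$ for large integers $s$: a direct verification shows $P$ is IRV-simple, $a$ wins in $P$, $c$ wins both $P_{-a}$ and $P_{-b}$, and $\alpha = 2s/(6s-7) \to 1/3^+$ as $s \to \infty$, so some $s$ suffices for any given $\alpha > 1/3$.

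For Ranked Pairs and Schulze, I would first invoke the standard fact that on three candidates both rules coincide on simple profiles: they elect the Condorcet winner when one exists, and on a 3-cycle they elect the candidate whose unique pairwise loss has the smallest margin. The Condorcet case is immediate since removing any candidate other than the Condorcet winner leaves the latter dominating its single remaining opponent. In the 3-cycle case, a direct enumeration over the three possible clone pairs and the two cycle orientations (six sub-cases in total) shows that for each configuration exactly one of the two removals preserves both the winner-set membership of the non-clone and the intersection of the winner set with the clone pair; the reason is simply that the outcome of each $P'$ is determined by the margin between the two surviving candidates, a quantity already fixed by $P$.

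The main obstacle is the linear-arithmetic argument in the IRV positive direction: the failed-removal sum, the eliminated-first sum, and the normalization must combine precisely to extract the sharp threshold $Y > n/3$. The sharpness of this bound is exactly what the parametric counterexample is engineered to saturate, which in turn requires tuning the parameters so that the profile has no ties at either the elimination or the runoff step — the asymmetric choice $n_5 = 2s-3$, $n_6 = 0$ is tailored to this. The remaining case-work, for the easy IRV cases and for the Ranked Pairs and Schulze enumeration, is routine but slightly tedious.
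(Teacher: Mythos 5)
Your proposal is correct and shares the paper's overall skeleton (case analysis on the first-eliminated candidate for IRV; Condorcet-winner versus $3$-cycle dichotomy for Ranked Pairs and Schulze, using that the two rules coincide for $m=3$), but the two quantitative ingredients are genuinely different. For the positive IRV direction, the paper does \emph{not} argue by contradiction: it pigeonholes the two ``$c$-in-the-middle'' ranking types to pick the clone (say $a$) favoured by at least half of those voters, and then directly lower-bounds the $a\succ c$ count in $P_{-b}$ by $n-n_1-n_2/2\ge n/2$. Your route --- assuming both removals fail, summing the two failed majority inequalities to get $X\le Z$, summing the two elimination inequalities to get $2Z<X+Y$, and concluding $Y>n/3$ --- is tighter: the paper's bound only yields ``at least $n/2$'', which leaves open a tie with $c$ in $P_{-b}$ (violating condition (1)), whereas your strict inequalities close that door. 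Your tightness example is also different from the paper's: the paper uses the family $\{2k:c\succ a\succ b,\ k{+}1:a\succ c\succ b,\ k{+}1:b\succ c\succ a,\ k:a\succ b\succ c,\ k:b\succ a\succ c\}$, which in fact produces exact $3k{+}1$ versus $3k{+}1$ ties in both $P_{-a}$ and $P_{-b}$ (still a violation, but less clean than your tie-free family). Two small points you should make explicit in a final write-up: (i) the $\beta$-swap claim follows in one line from the observation, already in Section 3, that the two measures coincide for $m=3$; and (ii) the phrase ``which candidate is eliminated first'' needs a word about parallel-universe ties at the elimination step (if a clone ties for last place, the universe eliminating it already forces the surviving clone or $c$ to win the resulting two-candidate majority, which hands you the required $P'$) --- the paper glosses over this as well, so it is not a gap relative to the published argument.
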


\begin{proof}
Let us first show that IRV is not weakly independent of $\alpha$-deletion clones for $\alpha > 1/3$. Consider the following profile with 3 candidates $a,b,c$:
\begin{align*}
    2k:~ &c \succ a \succ b &
    k+1:~ & a \succ c \succ b &
    k+1:~ & b \succ c \succ a \\
    k:~ & a \succ b \succ c &
    k:~ & b \succ a \succ c
\end{align*} 

In this profile, $a$ and $b$ are $(2k+2)/(6k+2)$-clones, which tends to $1/3$ when $k$ tends to infinity. Thus, there exists $k$ such that $(2k+2)/(6k+2) < \alpha$ for any $\alpha > 1/3$. With IRV, $c$ is eliminated first, then $a$ wins. However, if we remove either $a$ or $b$, $c$ wins, which contradicts weak independence of $\alpha$-deletion clones.

Now, let us prove that IRV is independent of $\alpha$-deletion clones for $\alpha \le 1/3$. If in the profile with both clones, one of the clones is eliminated first, then removing it from the election will not change the outcome. Thus, the only elections for which removing either one of the clones can change the result is when the first eliminated candidate is not a clone. In this case, a breach of independence of approximate clones implies that removing either of the clones should make this third candidate win, like in the example above. Let us assume without loss of generality that $a$ and $b$ are $\alpha$-deletion clones, and that $c$ is the first eliminated candidate. There are only 6 possible rankings of the three candidates. Because $c$ is eliminated first, its plurality score is necessarily lower or equal than $n/3$. Therefore, there are at most $n/3$ rankings of the form $c \succ a \succ b$ or $c \succ b \succ a$. Let us denote $n_1 \le n/3$ the number of such rankings. Moreover, since $a$ and $b$ are $1/3$-deletion clones, there are at most $n/3$ rankings of the form $a \succ c \succ b$ or $b \succ c \succ a$. Let us denote $n_2$ the number of such rankings. Combining these two facts, we obtain that there are \emph{at least} $n - n_1 - n_2 \ge n/3$ rankings of the form $a \succ b \succ c$ or $b \succ a \succ c$. Moreover, since there are $n_2$ rankings of the form $a \succ c \succ b$ or $b \succ c \succ a$, one of the two rankings appears more than $n_2/2$ times by pigeonhole principle. Let us assume without loss of generality that $a \succ c \succ b$ appears more than $n_2/2$ times. Then, if we remove $b$, the number of rankings of the form $a \succ c$ is greater than $n_2/2 + (n - n_1 - n_2) = n - n_1 - n_2/2 \ge n - n/3 -n/6 = n/2$, which is enough to make $a$ win. This proves that IRV is independent of $\alpha$-deletion clones for $\alpha \le 1/3$.

Let us now consider Ranked Pairs. Clearly, if there are no cycles involving the three candidates in the pairwise majority graph, then the outcome will not change when we remove a losing approximate clone. Let us assume that there is a cycle between the three candidates $a,b$ and $c$ in the pairwise majority graph. Without loss of generality, assume that $a$ and $b$ are $\alpha$-deletion clones and the cycle is $a \succ b \succ c \succ a$. If $c$ wins in this profile, it still wins when we remove $b$, and if either $a$ or $b$ wins, then $b$ wins when we remove $a$. In both cases, the axiom is satisfied. Since Ranked Pairs and Schulze coincide when $m=3$, the above proof also works for Schulze.
\end{proof}

Without the simple-profile condition, the profile $P = \{1: a \succ c \succ b, k:a \succ b \succ c, k+1:b \succ a \succ c, 2k: c \succ a \succ b\}$ is a counterexample for IRV for arbitrary small $\alpha > 0$ (by choosing $k$ accordingly). In this profile, $a$ and $b$ are approximate clones, and in the first step both $a$ and $b$ can be eliminated, thus giving IRV$(P) = \{a,c\}$ (using parallel-universe tie-breaking). However, if we remove $b$, only $a$ wins, and if we remove $a$, only $c$ wins, violating weak independence of $\alpha$-deletion clones in both cases.

Note that the fact that Ranked Pairs and Schulze are weakly independent of $\alpha$-deletion clones for any $\alpha \le 1$ when $m=3$ can be seen closer to a result on independence of losers than to a result on independence of approximate clones.

\subsection{Subclasses of profiles}

Let us now consider subclasses of profiles for which rules are weakly independent of $\alpha$-deletion clones and $\beta$-swap clones. Clearly, any Condorcet rule is weakly independent of $\alpha$-deletion clones and of $\beta$-swap clones for any $\alpha \ge 0$ or $\beta \ge 0$ when there exists a Condorcet winner, since at most one of the approximate clones can be the winner, and removing a loser does not change the Condorcet winner. This result includes single-peaked profiles, which necessarily have a Condorcet winner \citep{black1948rationale}.

We can extend this result using the notion of \emph{Smith set}, which is the set of candidates of minimum size such that every candidate inside the set beats every candidate outside the set in pairwise comparison. More formally, it is the set $S$ of minimum size such that for all $x \in S$ and $y \in C \setminus S$, we have $M_{x,y} > 0$. If the Smith set contains exactly one candidate, then this candidate is the Condorcet winner. A voting rule is said to be \emph{independent of Smith-dominated alternatives} if removing a candidate that is not in the Smith set does not change the outcome of the election. Additionally, a voting rule is said to satisfy the \emph{Smith criterion} if the winner is necessarily from the Smith set. Both axioms are satisfied by Ranked Pairs and Schulze's rule, as well as Copeland \citep{coleman1966possibility} and Kemeny-Young \citep{kemeny1959mathematics}. We can show the following result. 

\begin{theorem}
\label{thm:smith-alpha-clones}
    If $n$ is odd, any voting rule that satisfies the Smith criterion and is independent of Smith-dominated alternatives is weakly independent of $\alpha$-deletion clones (resp. $\beta$-swap clones) for any $\alpha \ge 0$ (resp. $\beta \ge 0$) when the size of the Smith set is at most 3.
\end{theorem}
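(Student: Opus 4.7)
My plan is to case-split on whether both approximate clones $x, y$ lie in the Smith set $S$ of $P$. If at most one of them does — say $y \notin S$ — then $y$ is Smith-dominated, so independence of Smith-dominated alternatives gives $f(P) = f(P_{-y})$, and both weak-independence conditions hold trivially for $P' = P_{-y}$.

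The substantive case is $\{x, y\} \subseteq S$. Here the hypothesis that $n$ is odd makes every pairwise margin a nonzero odd integer, so the tournament has no ties. This rules out $|S| = 2$: in $S = \{x, y\}$ the pairwise winner of $x$ against $y$ would still dominate every non-$S$ candidate and hence be a Smith set of size $1$ on its own, contradicting minimality of $S$. Therefore $|S| = 3$; write $S = \{x, y, z\}$. The same minimality argument forbids any candidate in $S$ from pairwise beating the other two, so the tournament restricted to $S$ must be a $3$-cycle, say $M_{x,y}, M_{y,z}, M_{z,x} > 0$ WLOG. In the reduced profile $P_{-x}$ the candidate $y$ pairwise beats $z$ and (since pairwise margins among remaining candidates are preserved) also beats every non-$S$ candidate, so $y$ is a Condorcet winner of $P_{-x}$; by the Smith criterion $f(P_{-x}) = \{y\}$, and symmetrically $f(P_{-y}) = \{z\}$.

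To conclude, I would use $f(P) \subseteq S$ (from the Smith criterion) to pick $P' = P_{-x}$ if $z \notin f(P)$ and $P' = P_{-y}$ otherwise, and then verify both conditions. In the first branch, neither $f(P)$ nor $f(P_{-x}) = \{y\}$ contains $z$ or any non-$S$ candidate, so condition~(1) is immediate; and $f(P) \subseteq \{x, y\}$ is nonempty while $f(P_{-x}) = \{y\}$ also meets $\{x, y\}$, giving condition~(2). In the second branch, $z$ lies in both winner sets, giving condition~(1). The main obstacle is condition~(2) in this second branch: it requires $f(P) \cap \{x, y\} = \emptyset$, i.e., that $z \in f(P)$ forces $f(P) = \{z\}$. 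This is automatic for resolute outcomes and on simple profiles in the spirit of \Cref{thm:irv-ranked-pairs-alpha-clones}; in general, ruling out a mixed tied outcome such as $f(P) = \{x, z\}$ or $\{x, y, z\}$ on a fully symmetric $3$-cycle is the most delicate step, and I would handle it either by adopting a simple-profile restriction or by a dedicated symmetry argument exploiting the cyclic structure of the tournament on $S$.
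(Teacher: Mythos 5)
Your proposal follows essentially the same route as the paper's proof: remove a clone lying outside the Smith set via independence of Smith-dominated alternatives; use oddness of $n$ to exclude majority ties; and in the remaining case reduce to a $3$-cycle Smith set $\{x,y,z\}$ in which, by the Smith criterion, $f(P_{-x})=\{y\}$ and $f(P_{-y})=\{z\}$, choosing which clone to delete according to $f(P)$. Your observation that $|S|=2$ is impossible for odd $n$ is in fact a cleaner disposal of that case than the paper's direct argument, which is otherwise the same decision rule (the paper phrases it as ``remove the dominated clone if a clone wins, the dominating clone if the non-clone wins'').

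The delicate step you flag is genuine, and you should know that the paper's own proof does not resolve it: its case distinction ``one of the clones wins'' versus ``the non-clone candidate wins'' tacitly treats these as exclusive, i.e., assumes a unique winner. A symmetry argument cannot close the gap, because the tied case is a real counterexample to the unrestricted statement: take one voter each with $x \succ y \succ z$, $y \succ z \succ x$, $z \succ x \succ y$ (so $n=3$ is odd, all margins equal $1$, and $x,y$ are $1/3$-deletion clones). Copeland, or Ranked Pairs and Schulze under parallel-universe tie-breaking --- all satisfying the Smith criterion and independence of Smith-dominated alternatives --- give $f(P)=\{x,y,z\}$, while $f(P_{-x})=\{y\}$ and $f(P_{-y})=\{z\}$; deleting $x$ violates condition (1) (candidate $z$ drops out of the winner set) and deleting $y$ violates condition (2) (no clone remains a winner). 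So the statement implicitly needs the same simple-profile (resolute-outcome) restriction as \Cref{thm:irv-ranked-pairs-alpha-clones}; with that restriction, $z \in f(P)$ forces $f(P)=\{z\}$ and your argument is complete. In short, your proof matches the paper's, and the ``most delicate step'' you identify is a gap in the paper's write-up rather than a deficiency of your approach --- the simple-profile fix, not a symmetry argument, is the correct resolution.
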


\begin{proof}
     The proof follows the same idea as the proof of \Cref{thm:irv-ranked-pairs-alpha-clones}. If the Smith set contains only one candidate, then the result is clear, because at least one of the approximate clones must be outside the Smith set. By independence of Smith-dominated alternatives, removing this non-Smith candidate does not change the winner set, thus satisfying weak independence. 
     
     Now, assume that the Smith set contains two candidates. If at least one of the approximate clones is not in the Smith set, we can remove it without changing the outcome of the election by independence of Smith-dominated alternatives. If both candidates in the Smith set are clones, then removing either one of them does not change the outcome of the election, because the Smith set only contains the two clones.

     Finally, consider the case where the Smith set contains three candidates. If one clone is not in the Smith set, then we can safely remove it by independence of Smith-dominated alternatives. If both clones are in the Smith set, then there is a cycle between the three candidates, and the third candidate is dominated by one clone and dominates the other. Note that since $n$ is odd, pairwise majority margins are such that $M_{x,y} \ne 0$ for all $x,y \in C$. If one of the clones wins in the original profile, removing the dominated clone makes the other clone win by the Smith criterion (because it becomes the Condorcet winner). If the non-clone candidate of the Smith set wins in the original profile, then removing the dominating clone makes the non-clone candidate win by the Smith criterion (because it becomes the Condorcet winner). Thus, in all cases, one of the approximate clones can be removed such that the conditions of weak independence of approximate clones are met.
\end{proof}

Note that the proof also applies for even $n$ when there are no ties in the pairwise majority graph (i.e., for all $x,y \in C$, $M_{x,y} \ne 0$).

This result is particularly relevant, since it is known that in practice, when the number of voters is large, most preference profiles admit a Condorcet winner, and profiles that do not admit a Condorcet winner generally have a Smith set of size at most 3 \citep{graham2025examination,tideman2011modeling}. However, this result holds for \emph{weak} independence of $\alpha$-deletion clones, but not anymore for independence of $\alpha$-deletion clones (in particular, the profile used in the proof of \Cref{thm:indep-alpha-clones-imp} contains a Condorcet winner).

\section{Empirical Analysis} \label{sec:real}

Let us now investigate approximate clones and whether rules are independent of approximate clones in real-world datasets of preferences. We will study the following datasets:

\begin{itemize}
    \item \textbf{Local Scottish elections}: This dataset was collected by \citet{mccune2024monotonicity} and analyzed by \citet[Section 3]{bardal2025proportional}. These are votes in multi-winner elections in Scotland, in which the Single Transferable Vote (STV) rule is used, which is the equivalent of IRV for committee voting and is known to be independent of clones \citep{tideman1995single}. Because these are multi-winner elections, in many instances several candidates from the same party are running. There is a total of 1\,070 preference profiles, with between $m=3$ and $14$ candidates and between $n=664$ and $14\,217$ voters. Voters were not required to cast full rankings and could give truncated rankings. However, we only kept the votes with a full ranking, giving us between $n=43$ and $2\,905$ voters per profile. The distributions of the number of voters and candidates in this dataset are shown in \Cref{fig:scottish_stats}.
    \item \textbf{Figure skating competitions}: This dataset was collected by \citet{smith2000analysis} (based on numerical scores given by judges to participants in figure skating competitions) and is available on Preflib \citep{MaWa13a}. There are a total of 48 profiles, with between $n=7$ and $9$ voters (judges) and between $m=14$ and $30$ candidates. 
    \item \textbf{Mini-jury deliberation}: This last dataset was collected by \citet{tessler2024ai} to train an AI system to mediate human deliberation. It contains 2\,581 profiles of $n=5$ voters and $m=4$ candidates, in which the voters were asked to rank several group statements on various topics (e.g., climate change, universal basic income, etc.),  where statements were provided by an LLM as an aggregate of voters' personal opinions on the topic discussed. The Schulze rule was used to aggregate the rankings.
\end{itemize}

\begin{figure}[!t]
 \centering
    \begin{subfigure}{.49\textwidth}
 \includegraphics[width=1\textwidth]{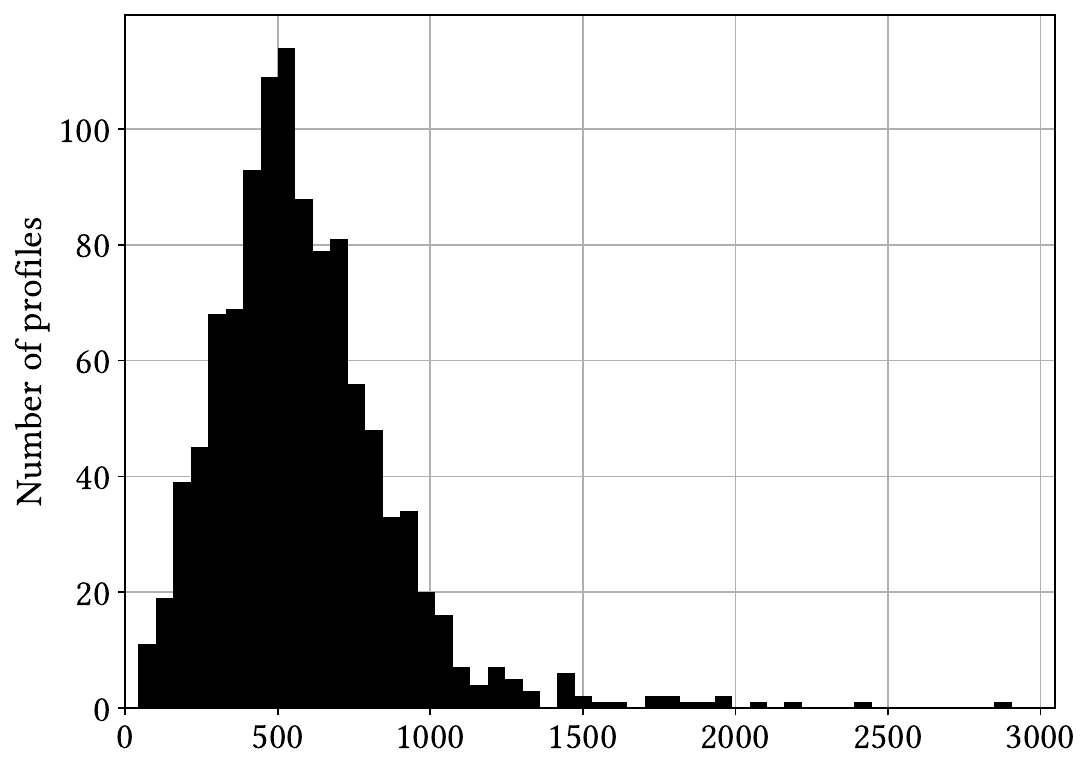}
 \caption{Distribution of the number of voters $n$.}
 \label{fig:scottish_n}
    \end{subfigure}
    \begin{subfigure}{.49\textwidth}
 \includegraphics[width=1\textwidth]{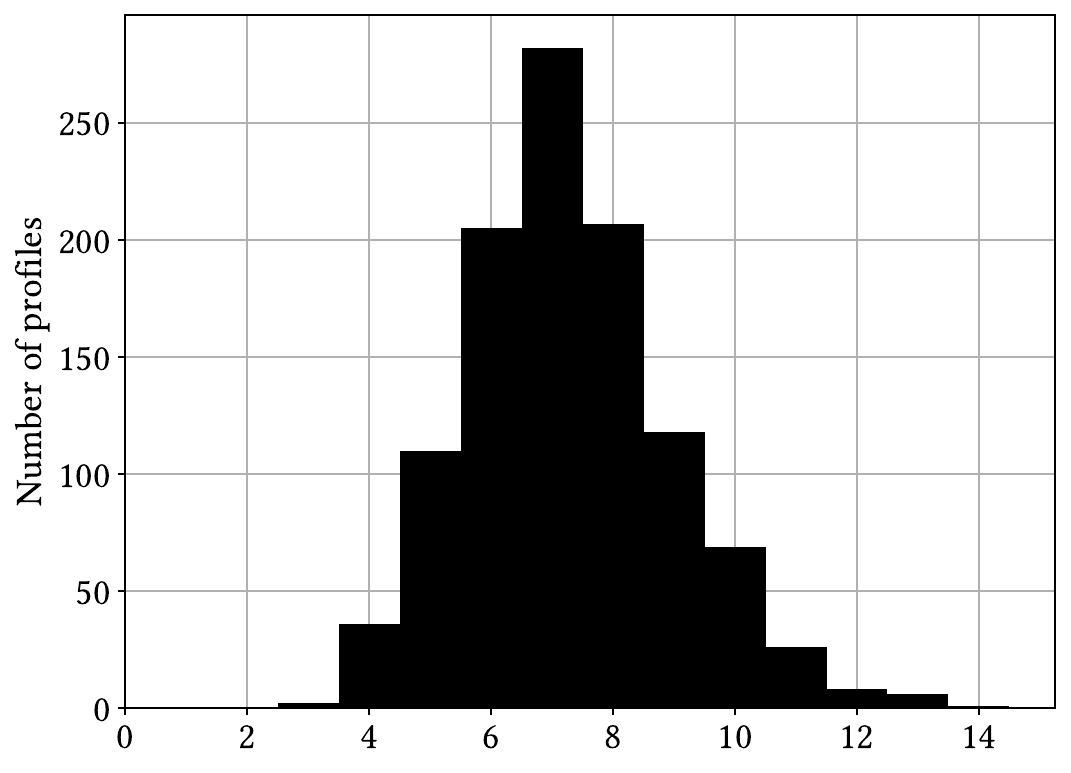}
 \caption{Distribution of the number of candidates $m$.}
 \label{fig:scottish_m}
    \end{subfigure}
    \caption{Distribution of the number of voters (who ranked all candidates) and of candidates in the Scottish elections dataset.}
    \label{fig:scottish_stats}
 \end{figure}

Each of these datasets is likely to contain approximate clones. In the Scottish elections, parties often presented several candidates in the same election, which are potential approximate clones, as they are likely to be ranked similarly by voters. In figure skating competitions, rankings of different judges are often very similar, since they are mostly based on participants' performances and skills, which can be seen as an objective measure. In that sense, participants with similar skills are likely to be perceived as approximate clones. Finally, in the mini-jury deliberation dataset, statements are often very similar, as they are all generated with the same input (voters' personal opinions). Note that in both the Scottish elections and the deliberation votes, the voting rules used (STV and Schulze) are known to be independent of perfect clones. %In the Appendix, we also provide a more qualitative analysis of five datasets of preferences collected during French presidential elections.

\subsection{Approximate Clones}

We first investigate the presence of approximate clones in these datasets. For each profile, we computed for each pair of candidates the (minimal) value of $\alpha$ and $\beta$ for which they are $\alpha$-deletion clones and $\beta$-swap clones. We are particularly interested in the minimum value of $\alpha$ and $\beta$ over all pairs of candidates in each profile, as this corresponds to the pair of candidates that are the closest to being clones.

 \begin{figure}[!t]
 \centering
    \begin{subfigure}{.49\textwidth}
\includegraphics[width=\textwidth]{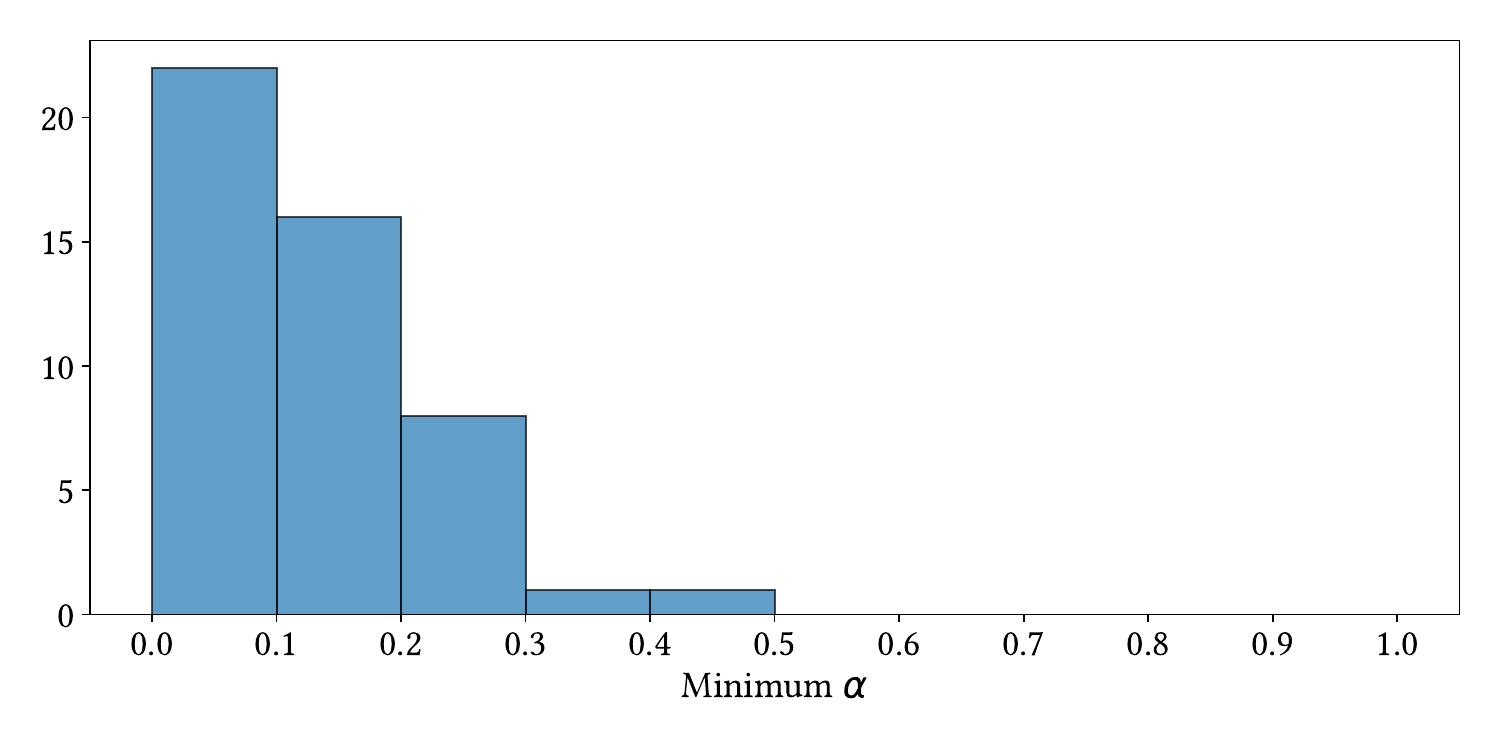}
    \end{subfigure}
    \begin{subfigure}{.49\textwidth}
\includegraphics[width=\textwidth]{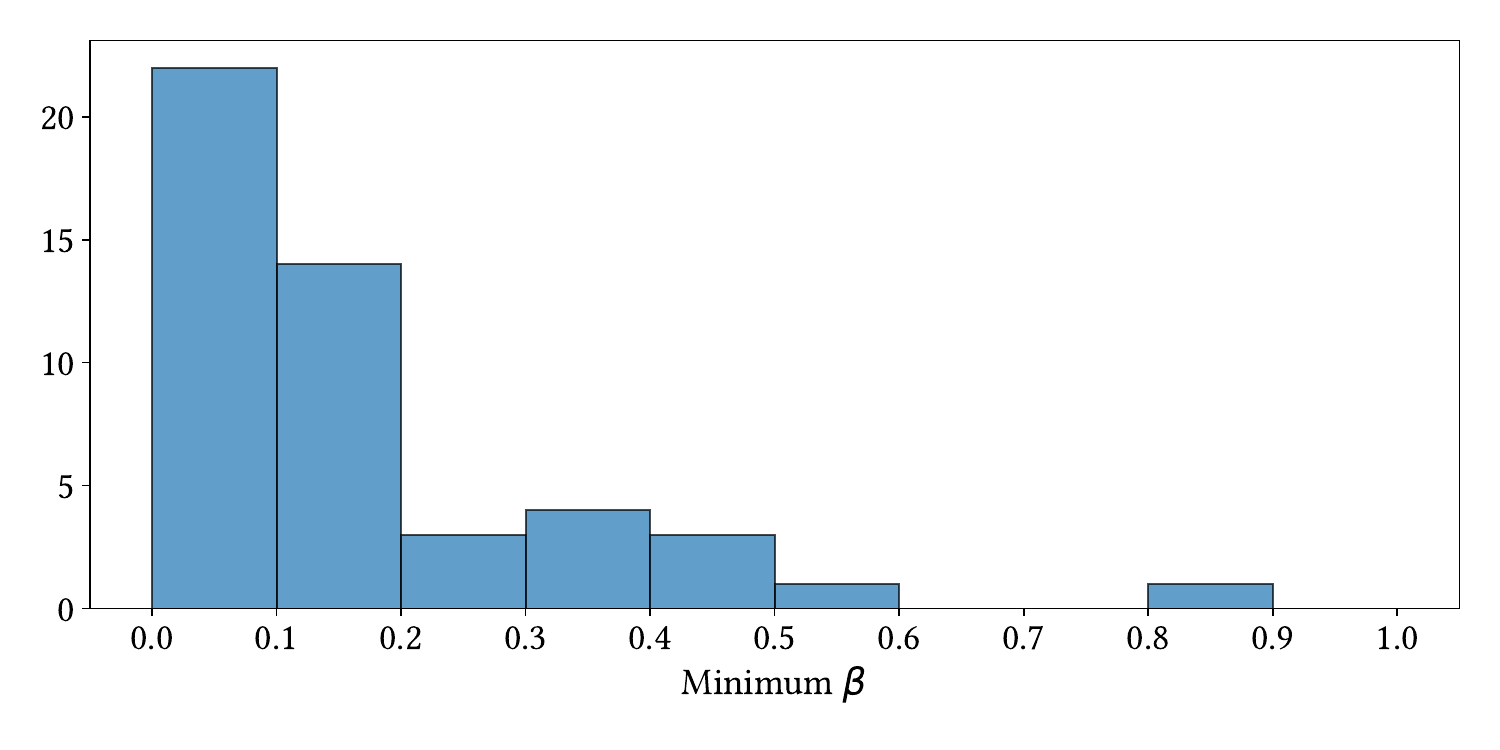}
    \end{subfigure}
\caption{Distributions (in number of instances) of the minimum value of $\alpha$ (left) and $\beta$ (right) for which pairs of candidates  are respectively $\alpha$-deletion clones and $\beta$-swap clones in the figure skating competitions dataset.}
\label{fig:skating_distributions}
\end{figure}

The dataset which contains the highest proportion of perfect and approximate clones is the figure skating competition, with 46\% of profiles (22 out of 48) containing perfect clones, and average minimum values of $\alpha$ and $\beta$ of respectively $0.09$ and $0.135$ over all profiles (see \Cref{fig:skating_distributions} for the full distributions). This is mostly because the different judges cast very similar rankings, and because the number of voters is quite low. Because of this, we also observe pairs of candidates with very high values of $\alpha$ (meaning only a few voters rank them consecutively) but low values of $\beta$ (meaning that they are close to each other in voters' ranking). This is for instance the case if every jury has the same ranking and we are looking at the candidates in first and third position. Thus, because of the particular structure of this dataset, the two measures are not always correlated and the $\beta$ value is much more informative.

\begin{figure}[!t]
\centering
\includegraphics[width=\textwidth]{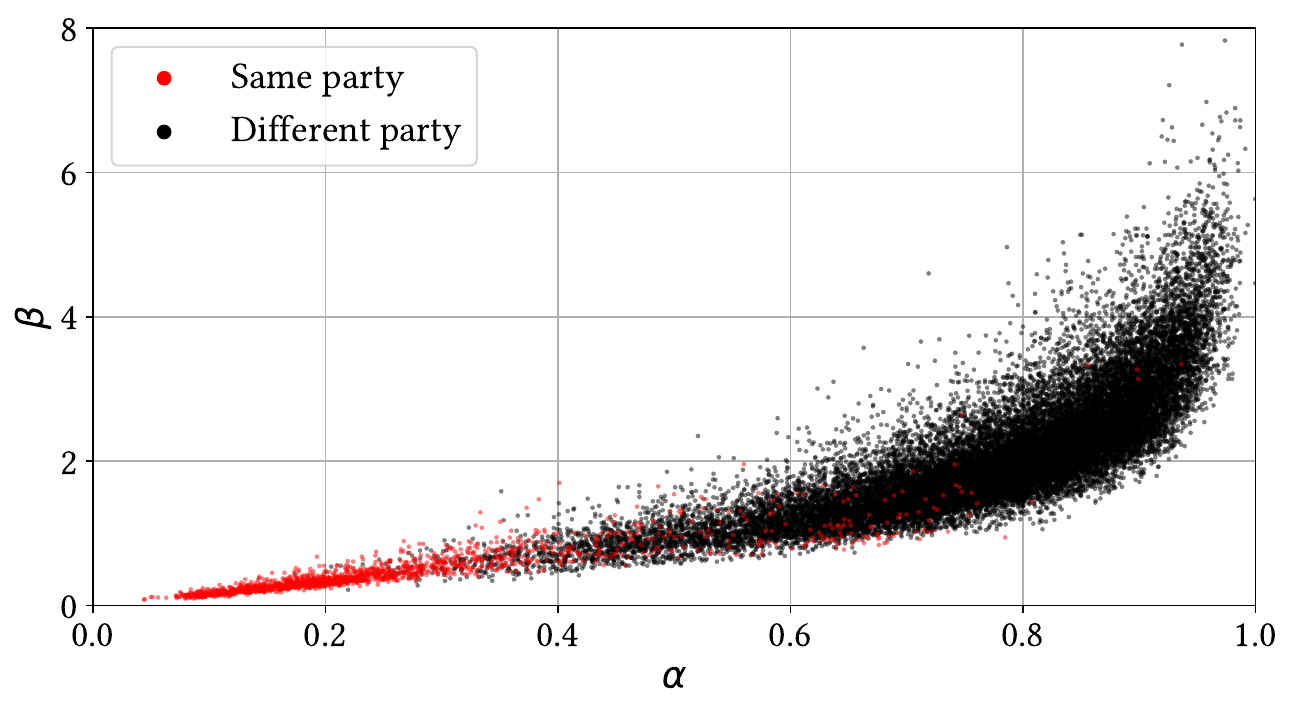}
\caption{Values of $\alpha$ and $\beta$ for all pairs of candidates in the Scottish elections dataset. Red dots correspond to pairs of candidates from the same party, and black dots correspond to pairs from different parties.}
\label{fig:scotland_1}
\end{figure}

If we now turn to the Scottish elections dataset, we observe that none of the preference profiles admit perfect clones, despite the presence of candidates from the same parties. However, we will see that these candidates can still be considered approximate clones, as they generally have very low values of $\alpha$ and $\beta$. On average, the minimum values of $\alpha$ and $\beta$ over all profiles are respectively $0.24$ and $0.44$. However, these low values are mostly driven by pairs of candidates from the same party, which are much more likely to be ranked close to each other (e.g., due to party instructions). For instance, 99.4\% of the 664 pairs of candidates that are $\alpha$-deletion clones for $\alpha \le 0.2$ are from the same party, and if we consider only pairs of candidates from distinct parties in each profile, the average minimum value of $\alpha$ and $\beta$ respectively increase to $0.45$ and $0.9$. This is also clear in \Cref{fig:scotland_1}, where each dot represent a pair of candidates in one election, and red dots correspond to pairs of candidates from the same party. \Cref{fig:scotland_1} also highlights a strong correlation between the values of $\alpha$ and $\beta$ in this dataset, especially for pairs of candidates that are the closest to being clones.

Another interesting observation is that the average minimal values of $\alpha$ (and $\beta$) do not seem to depend much on the number of candidates in the election, while the theoretical higher bounds on these values given by \Cref{prop:average-alpha-beta} ($(m-2)/m$ and $(m-2)/3$ respectively) increase with $m$, as shown in \Cref{fig:scotland_2a} (for $\alpha$-deletion clones). This is again mainly due to the presence of candidates from the same party, which are likely to be ranked close to each other regardless of the number of candidates. If we only consider pairs of candidates from distinct parties, we observe a slight increase of the average minimum value of $\alpha$ (and $\beta$) with the number of candidates (see \Cref{fig:scotland_2b} for $\alpha$-deletion clones).

Finally, the mini-jury deliberation dataset contains many perfect clones, with $36.5\%$ of profiles (943 out of 2\,581) admitting perfect clones, and $47.3\%$ of them admitting $1/n$-swap clones (i.e., they are \emph{one swap away} of having perfect clones). One reason is that the number of candidates and voters are both very low, but it is not the only reason: if we randomly sample profiles with 4 candidates and 5 voters using the impartial culture (IC) model (in which every ranking has the same probability to be sampled for every voter) we obtain perfect clones in only $17.3\%$ of profiles (on average over 1\,000\,000 samples). Another reason is that many of the profiles are close to (or equal to) the identity profile in which all voters have the same ranking.  

% In the Appendix, we also study the minimum value of $\alpha$ and $\beta$ obtained in synthetic preference profiles sampled from various probabilistic models of preferences, using the framework of the map of elections \citep{szufa2020drawing}.

\begin{figure}[!t]
\centering
    \begin{subfigure}{.49\textwidth}
\centering    
\includegraphics[width=1\textwidth]{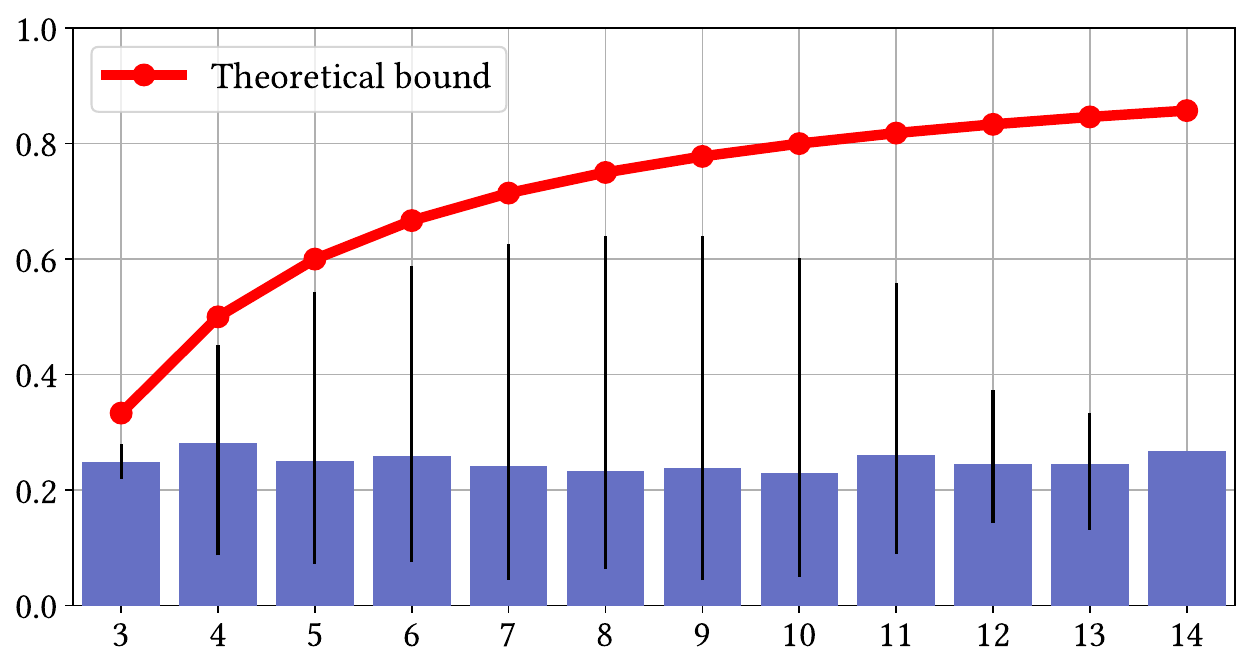}
\caption{Over \emph{all} pairs of candidates in each profile.}
\label{fig:scotland_2a}
\end{subfigure}
    \begin{subfigure}{.49\textwidth}
\centering    
\includegraphics[width=1\textwidth]{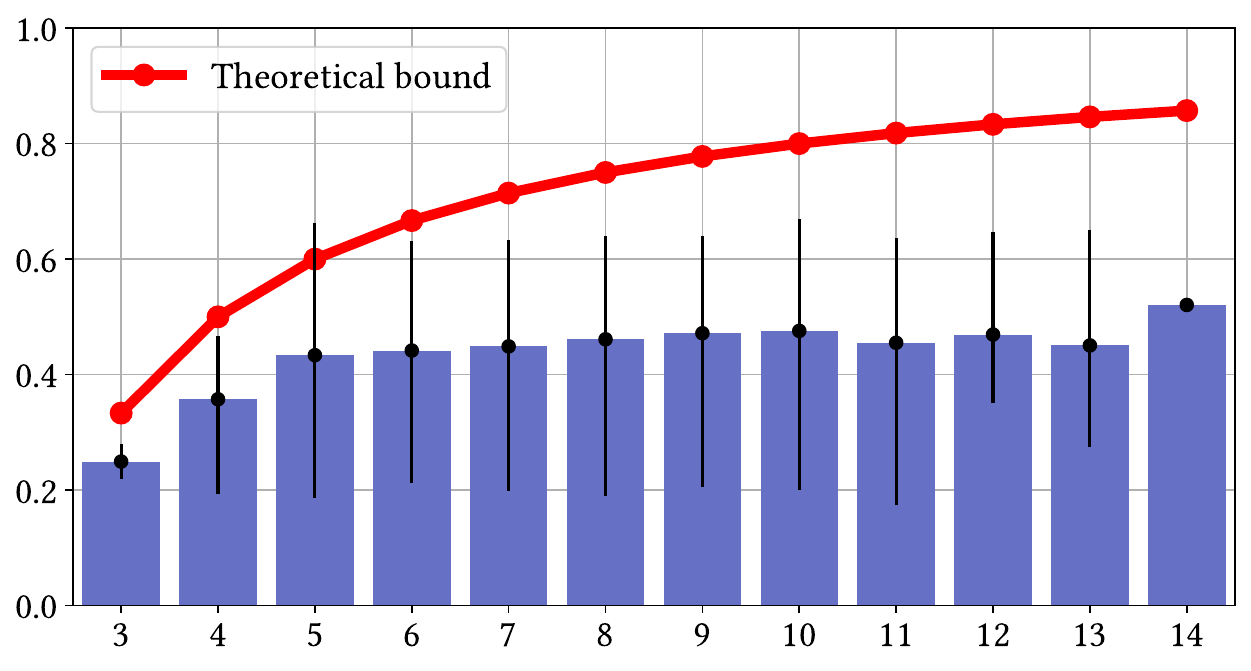}
\caption{Over pairs of candidates from \emph{distinct} parties.}
\label{fig:scotland_2b}
\end{subfigure}

\caption{Blue bars indicate the average minimum value of $\alpha$ for which two candidates are $\alpha$-deletion clones over all instances in the Scottish elections dataset, depending on the number of candidates $m$. The red line indicates the theoretical maximum $(m-2)/m$, and the black lines show the extreme values over all profiles for a given $m$.}
\label{fig:scotland_2}
\end{figure}

\begin{figure}[!t]
\centering
\includegraphics[width=0.49\textwidth]{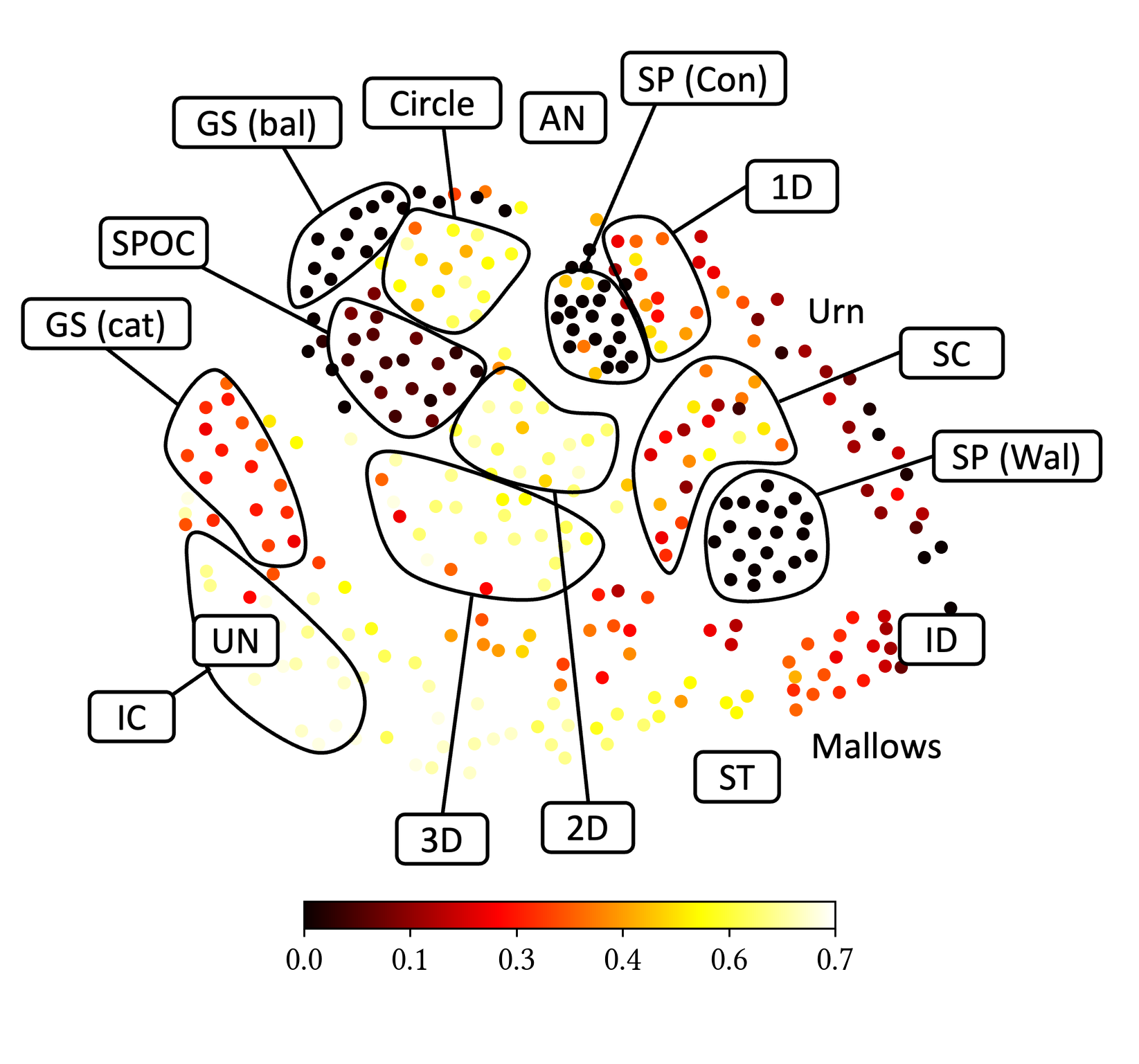}
\hfill 
\includegraphics[width=0.49\textwidth]{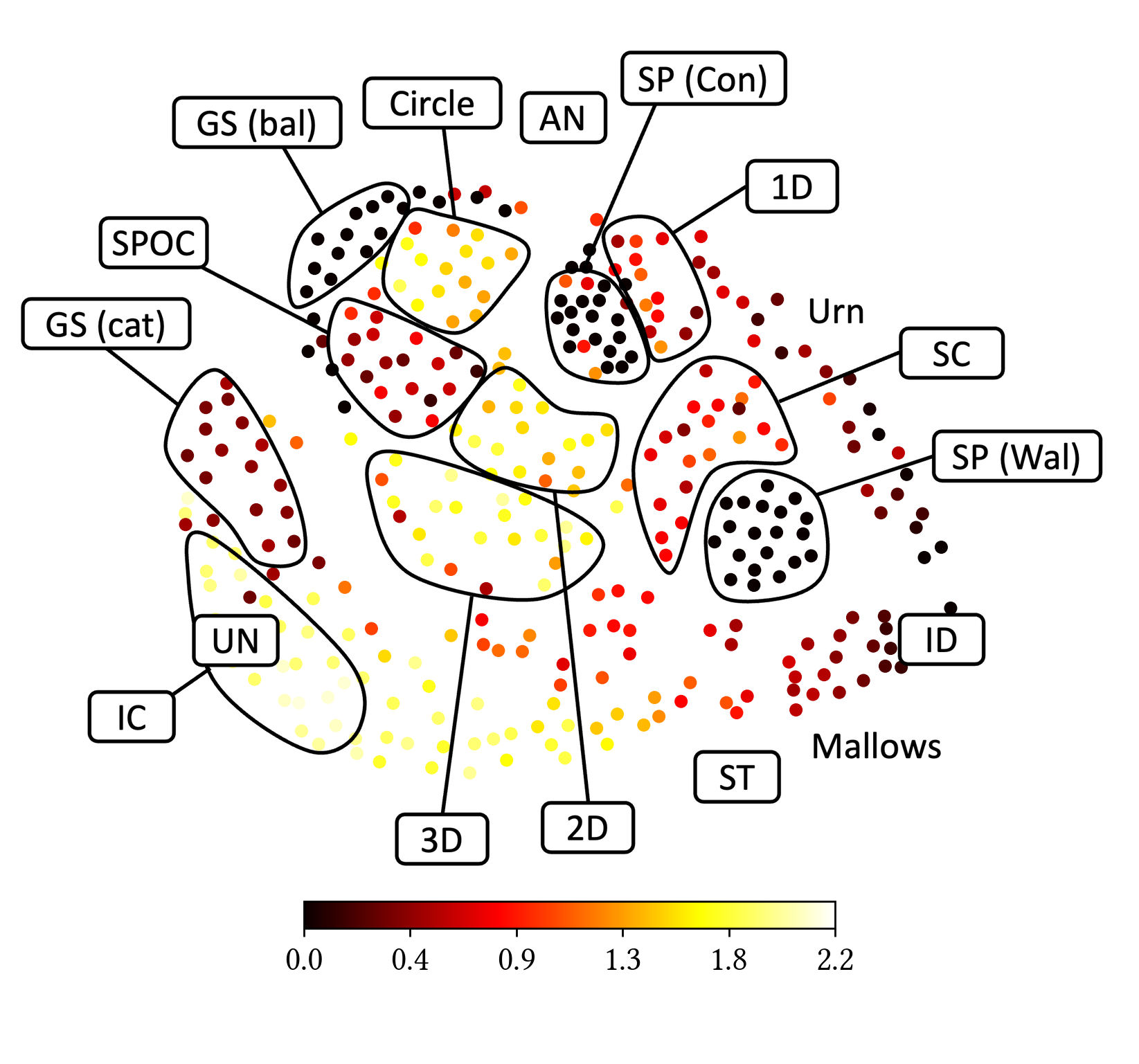}
\caption{Minimum value of $\alpha$ (left) and $\beta$ (right) for which two candidates are respectively $\alpha$-deletion and $\beta$-swap clone (over all pairs of candidates) for profiles with $m=10$ candidates and $n=50$ voters, visualized on the \emph{map of elections} \citep{szufa2020drawing}. Each point corresponds to a profile.}
\label{fig:map}
\end{figure}

% \begin{figure}[!t]
% \centering
% \caption{Minimum value of $\beta$ for which two candidates are $\beta$-swap clones in the map of elections. Each point corresponds to a profile.}
% \label{fig:map2}
% \end{figure}

We complete this analysis by computing the minimum value of $\alpha$ for which two candidates are $\alpha$-deletion clones and $\beta$-swap clones in profiles sampled from various models using the framework of the \emph{map of elections} \citep{szufa2020drawing,boehmer2021putting}. On a map of elections, each point corresponds to a preference profile (in this case with $m=10$ candidates and $n=50$ voters), and the positions of the points are based on some distance between the profiles and some embedding algorithm. Intuitively, points that are close to each other on the map represent preference profiles with similar structures. In \Cref{fig:map}, we used the swap distance and the embedding provided by the Kamada-Kawai algorithm \citep{kamada1989algorithm} to obtain the positions of the profiles, and we computed for each profile the minimum value of $\alpha$ (resp. $\beta$) for which two candidates are $\alpha$-deletion clones (resp. $\beta$-swap clones). Note that this particular experiment has also been conducted by \citet[Figure 1]{janeczko2024discovering}. 

We observe that for the impartial culture (IC), we obtain high values of $\alpha$ and $\beta$, while we obtain low values for profiles close to the identity profile (ID), confirming our previous results. It appears that the Urn model (another popular model in social choice), gives elections with low minimum values of $\alpha$ and $\beta$ as well. This is also the case of highly structured preferences (in particular single-peaked and single-crossing preferences, but also 1D-Euclidean preferences). Interestingly, we do not obtain such low values of $\alpha$ and $\beta$ for Euclidean preferences in higher dimensions (e.g., 2D, 3D or Circle). 

% \Cref{fig:map2} shows the minimum value of $\beta$ for which two candidates are $\beta$-swap clones in the map of elections (with $m=10$ and $n=50$). We observe that the results are very similar to the ones we obtain for $\alpha$-clones, with high values of $\beta$ (indicating distant pairs)  for the impartial culture and low values (indicating close pairs) for the identity profile and highly structured preferences.

\subsection{Independence of Approximate Clones}

Now that we have established the presence of approximate clones in these datasets, we can investigate whether voting rules are independent of approximate clones in practice. This will also give us some insights on the strength of the different independence axioms.

We will include in our analysis four voting rules: in addition to IRV and Ranked Pairs, that are independent of clones, we consider Plurality and Borda, which are not. We omit Schulze for this empirical analysis and we expect it to behave quite similarly to Ranked Pairs. Plurality selects the candidate(s) which are ranked first by the highest number of voters, and Borda selects the candidate(s) with the highest Borda score, where the Borda score of some candidate $c \in C$ is the sum of $m-\sigma_i(c)$ over all voters $i \in V$. 

For each dataset, we looked at different metrics:
\begin{itemize}
    \item The proportion of pairs of \emph{perfect} clones for which the rules break independence of \emph{perfect clones} (out of 34 pairs for the figure skating dataset and 1\,240 for the deliberation dataset).
    \item The proportion of pairs of $\alpha$-deletion clones for $0 < \alpha \le 0.2$ for which the rules break (weak) independence of approximate clones (out of 48 pairs for the figure skating dataset, 2\,751 pairs for the deliberation detaset, and 694 pairs for the Scottish elections dataset).
    \item The proportion of pairs of candidates(independently of $\alpha$) for which the rules break (weak) independence of approximate clones.
    \item The proportion of profiles for which the rules break independence of \emph{losers}, meaning that removing any of the losing candidates can change the outcome of the election. %Note that this implies independence of $0$-deletion clones when there is only one winner. 
\end{itemize}

\begin{table*}[!t]
\centering
\begin{tabular}{l l c c c c} \toprule
    Dataset & Rule & Perfect clones & Approx. clones ($\alpha \le 0.2$) &  All pairs & Losers \\ \midrule 
    \multirow{4}{*}{Deliberation} & IRV & 1 & 0.85 (0.96) & 0.62 (0.96) & 0.90 \\
    & Ranked Pairs & 1 & 0.83 (0.96) & 0.62 (0.97) & 0.95 \\
    & Borda & 0.91 & 0.71 (0.93) & 0.54 (0.92) & 0.78 \\
    & Plurality & 0.91 & 0.73 (0.92) & 0.48 (0.90) & 0.74 \\\midrule
    \multirow{4}{*}{Figure Skating} & IRV & 1& 1 (1) & 0.92 (1) & 1 \\
    & Ranked Pairs &1 &  1 (1) & 0.92 (1) & 1 \\
    & Borda & 1 &  1 (1) & 0.91 (0.99) & 0.96 \\
    & Plurality & 1 &  1 (1) & 0.91 (0.99) & 0.96 \\\midrule 
    \multirow{4}{*}{Scottish elections} & IRV & - &0.94 (1) & 0.76 (0.99) & 0.93 \\
    & Ranked Pairs & - & 0.94 (1) & 0.78 (0.99) & 0.98\\
    & Borda & - & 0.84 (0.92) & 0.72 (0.98) & 0.84 \\
    & Plurality & - & 0.84 (0.87) & 0.66 (0.96) & 0.68 \\\bottomrule 
\end{tabular}
\caption{Proportion of pairs of candidates/profiles for which the rules break independence axioms. Numbers in parentheses indicate the proportions for the weak version of the independence of approximate clones axiom.}
\label{tab:independence}
\end{table*}

The results are summarized in \Cref{tab:independence}. We clearly see that IRV and Ranked Pairs are much more robust and satisfy independence axioms in more cases than Borda and Plurality. In particular, they more often satisfy independence of losers. They also satisfy independence of approximate clones much more often for pairs of candidates that are close to being clones (i.e., with $\alpha \le 0.2$). One exception is the figure skating dataset, in which all rules perform well, as there are generally clear winners (thus removing a candidate rarely changes the outcome of the election).

\begin{figure}[!t]
\centering
\begin{subfigure}{.49\textwidth}
    \centering
\includegraphics[width=1\textwidth]{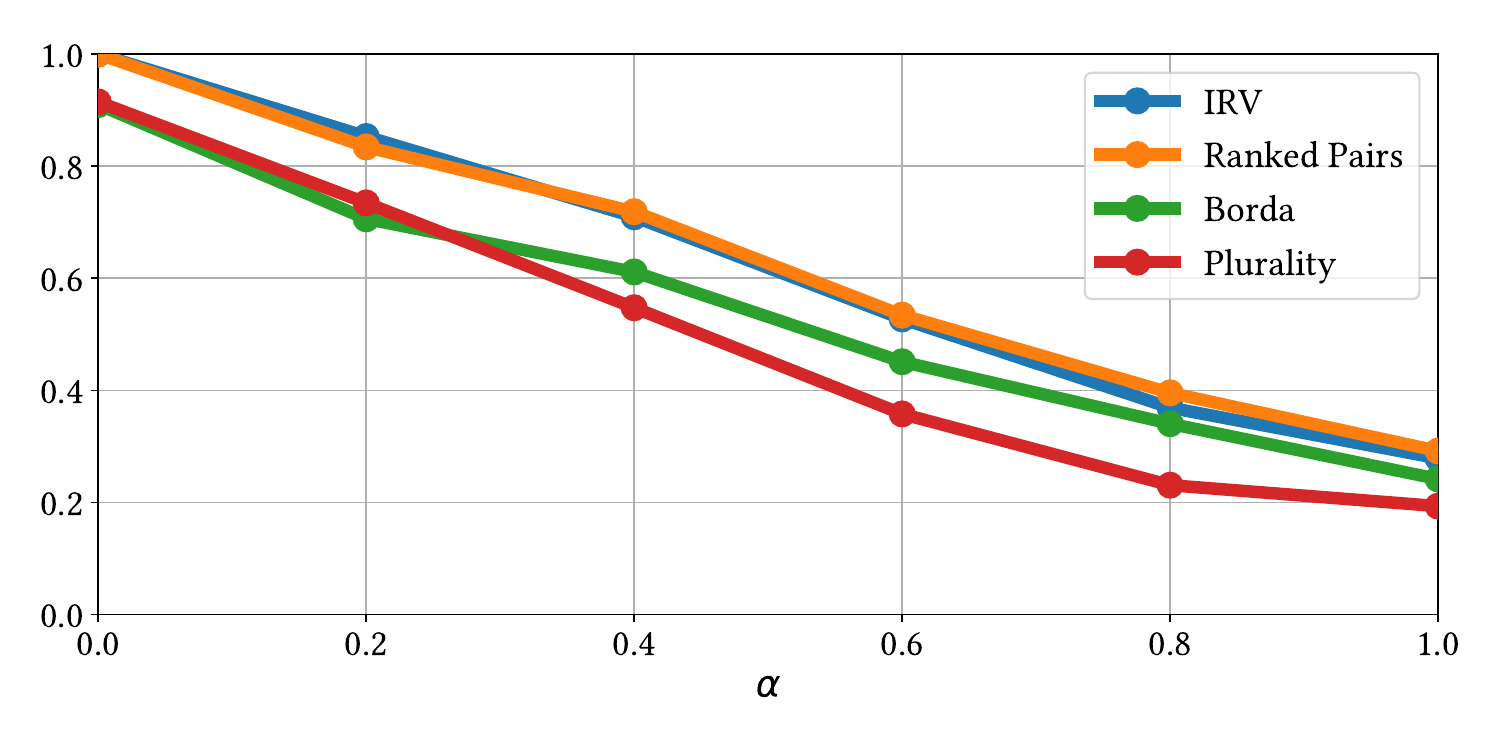}  
\caption{Independence of $\alpha$-deletion clones.}  
\label{fig:habermasa}
\end{subfigure}
\hfill
\begin{subfigure}{.49\textwidth}
    \centering
\includegraphics[width=1\textwidth]{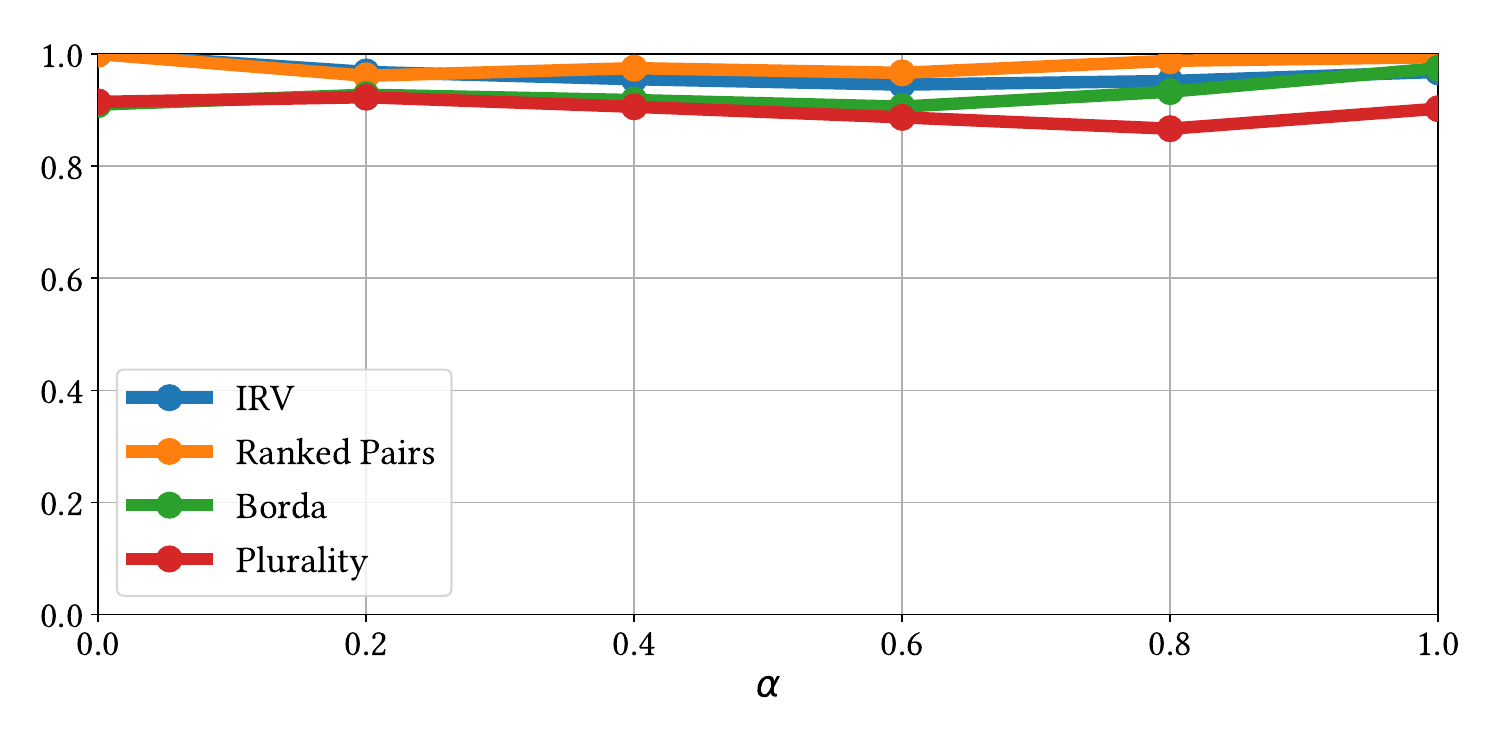}  
\caption{Weak independence of $\alpha$-deletion clones.}  
\label{fig:habermasb}
\end{subfigure}
\caption{Proportion of pairs of candidates for which each rule satisfies (weak) independence of $\alpha$-deletion clones on the Habermas dataset, for different values of $\alpha$.}
\label{fig:habermas}
\end{figure}

A surprising observation is that in all datasets, all the rules we consider satisfy weak independence of approximate clones for more than $90\%$ of pairs of candidates, independently of the value of $\alpha$ for which they are $\alpha$-deletion clones. More surprisingly, the rules break weak independence of approximate clones less often for pairs with very high values of $\alpha$ (the pairs furthest from being clones) than for pairs with moderate values of $\alpha$. On the deliberation dataset for instance, weak independence of approximate clones is broken by IRV for only $3\%$ of the 1\,220 pairs with $\alpha = 1$,  while it is broken for $5\%$ of the 3\,716 pairs with $\alpha = 0.4$ (see \Cref{fig:habermasb}). This counter-intuitive behavior can be explained by the fact that in this dataset, when two candidates are very far from being clones, then very often this means one candidate is always ranked among the last in voters' rankings and the other one always ranked among the first in voters' rankings, making it easy to satisfy the axiom by removing the clear loser. This suggests that weak independence of clones, despite being hard to satisfy in theory, might be too weak of an axiom to be really meaningful in practice.

However, the behavior of rules regarding the strong version of independence of approximate clones is much more insightful. Indeed, as it can be seen in \Cref{fig:habermasa}, the closer two candidates are to being clones (i.e., the lower the value of $\alpha$), the more likely it is that the voting rules satisfy independence of $\alpha$-deletion clones for these candidates. This figure also highlights again that IRV and Ranked Pairs are much more robust than Plurality and Borda, especially for low values of $\alpha$.

\section{Discussion and Further Remarks} \label{sec:conclu}

We analyzed two measures, $\alpha$-deletion clones and $\beta$-swap clones, to identify and quantify the proximity of approximate clones in ordinal preference profiles. Our theoretical analysis showed that results on independence of perfect clones that hold for some voting rules such as IRV or Ranked Pairs generally do not extend to approximate clones. A key open question remains whether any sensible voting rule can satisfy independence of approximate clones for non-trivial thresholds of $\alpha$ or $\beta$, or if  such theoretical robustness is fundamentally unattainable. 

This finding challenges the practical relevance of the standard independence of clones axiom, suggesting that its guarantees do not robustly extend to the more realistic scenario of approximate clones  (as we observed, perfect clones do not appear in actual large-scale political elections, even among same-party candidates). On the other hand, our empirical analysis of real-world datasets showed that approximate clones exist in many contexts in practice, and we observed that despite not satisfying (weak) independence of approximate clones in theory, IRV and Ranked Pairs still satisfy it often in practice, and the closer two candidates are to being clones, the more likely it is that these rules will satisfy it. 

In this paper, we attempted to escape the impossibility result given by \Cref{thm:indep-alpha-clones-imp} by weakening the independence of approximate clones axiom, but we saw that the weaker axiom appeared to be too weak to be meaningful in practice. Another interesting way to escape this impossibility would be to consider randomized voting rules, and to define a weakening of independence of approximate clones where the probability of winning of a candidate can only change by a function of $\alpha$ (or $\beta$) when an approximate clone is removed from the election. This could be an interesting direction for future work.

Future work could also extend the model of approximate clones by focusing on sets of clones instead of pairs of clones, using for instance the notions introduced by \citet{janeczko2024discovering} or \citet{faliszewski2025identifying}. Note that our measures can be generalized to bigger sets in several additional ways, leading to potential interesting axiomatic and empirical comparison of the different generalizations. Finally, a similar analysis of approximate clones could be conducted for other preference formats, such as approval or cardinal preferences. 

\section*{Acknowledgments}
We thank Ratip Emin Berker for the helpful discussions and comments on an earlier version of this paper (in particular for pointing out the counterexample of \Cref{thm:irv-ranked-pairs-alpha-clones} when the profile is not $f$-simple), as well as anonymous reviewers from AAMAS and MPREF for their valuable feedback. This work is supported by the ERC Sinergy Grant ADDI, ID: 101166894, \url{http://doi.org/10.3030/101166894}.

\begin{center}
\includegraphics[width=0.4\textwidth]{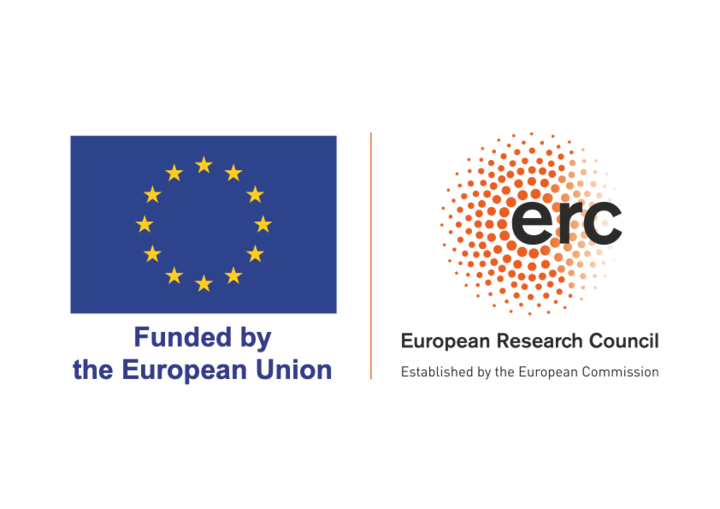}
\end{center}

\bibliographystyle{ACM-Reference-Format}
\bibliography{mybibfile}

@article{brandl2022approval,
  title={Approval voting under dichotomous preferences: A catalogue of characterizations},
  author={Brandl, Florian and Peters, Dominik},
  journal={Journal of Economic Theory},
  volume={205},
  pages={105532},
  year={2022},
  publisher={Elsevier}
}

@inproceedings{bardal2025proportional,
  title={Proportional representation in practice: Quantifying proportionality in ordinal elections},
  author={Bardal, Tuva and Brill, Markus and McCune, David and Peters, Jannik},
  booktitle={Proceedings of the AAAI Conference on Artificial Intelligence},
  volume={39},
  number={13},
  pages={13581--13588},
  year={2025}
}

@inproceedings{delemazure2023selecting,
  title={Selecting the Most Conflicting Pair of Candidates},
  author={Delemazure, Th{\'e}o and Janeczko, {\L}ukasz and Kaczmarczyk, Andrzej and Szufa, Stanis{\l}aw},
  booktitle={Thirty-Third International Joint Conference on Artificial Intelligence $\{$IJCAI-24$\}$},
  pages={2766--2773},
  year={2023},
  organization={International Joint Conferences on Artificial Intelligence Organization}
}

@article{janeczko2024discovering,
  title={Discovering Consistent Subelections},
  author={Janeczko, {\L}ukasz and Lang, J{\'e}r{\^o}me and Lisowski, Grzegorz and Szufa, Stanis{\l}aw},
  journal={arXiv preprint arXiv:2407.18767},
  year={2024}
}

@article{tideman1987independence,
  title={Independence of clones as a criterion for voting rules},
  author={Tideman, Nicolaus},
  journal={Social Choice and Welfare},
  volume={4},
  number={3},
  pages={185--206},
  year={1987},
  publisher={Springer}
}

@article{schulze2011new,
  title={A new monotonic, clone-independent, reversal symmetric, and condorcet-consistent single-winner election method},
  author={Schulze, Markus},
  journal={Social choice and Welfare},
  volume={36},
  number={2},
  pages={267--303},
  year={2011},
  publisher={Springer}
}

@article{smith1973aggregation,
  title={Aggregation of preferences with variable electorate},
  author={Smith, John H},
  journal={Econometrica: Journal of the Econometric Society},
  pages={1027--1041},
  year={1973},
  publisher={JSTOR}
}

@inproceedings{faliszewski2023diversity,
  title={Diversity, agreement, and polarization in elections},
  author={Faliszewski, Piotr and Kaczmarczyk, Andrzej and Sornat, Krzysztof and Szufa, Stanis{\l}aw and W{\k{a}}s, Tomasz},
  booktitle={Proceedings of the Thirty-Second International Joint Conference on Artificial Intelligence},
  pages={2684--2692},
  year={2023}
}

@article{arrow1950difficulty,
  title={A difficulty in the concept of social welfare},
  author={Arrow, Kenneth J},
  journal={Journal of political economy},
  volume={58},
  number={4},
  pages={328--346},
  year={1950},
  publisher={The University of Chicago Press}
}

@article{black1948rationale,
  title={On the rationale of group decision-making},
  author={Black, Duncan},
  journal={Journal of political economy},
  volume={56},
  number={1},
  pages={23--34},
  year={1948},
  publisher={The University of Chicago Press}
}

@article{coleman1966possibility,
  title={The possibility of a social welfare function},
  author={Coleman, James S},
  journal={The American Economic Review},
  volume={56},
  number={5},
  pages={1105--1122},
  year={1966},
  publisher={JSTOR}
}

@article{kemeny1959mathematics,
  title={Mathematics without numbers},
  author={Kemeny, John G},
  journal={Daedalus},
  volume={88},
  number={4},
  pages={577--591},
  year={1959},
  publisher={JSTOR}
}

@article{graham2025examination,
  title={An examination of ranked-choice voting in the United States, 2004--2022},
  author={Graham-Squire, Adam and McCune, David},
  journal={Representation},
  volume={61},
  number={1},
  pages={1--19},
  year={2025},
  publisher={Taylor \& Francis}
}

@incollection{tideman2011modeling,
  title={Modeling the outcomes of vote-casting in actual elections},
  author={Tideman, Nicolaus and Plassmann, Florenz},
  booktitle={Electoral systems: Paradoxes, assumptions, and procedures},
  pages={217--251},
  year={2011},
  publisher={Springer}
}

@inproceedings{szufa2020drawing,
  title={Drawing a map of elections in the space of statistical cultures},
  author={Szufa, Stanis{\l}aw and Faliszewski, Piotr and Skowron, Piotr and Slinko, Arkadii and Talmon, Nimrod},
  booktitle={Proceedings of the 19th International Conference on Autonomous Agents and Multiagent Systems},
  pages={1341--1349},
  year={2020}
}

@article{boehmer2021putting,
  title={Putting a compass on the map of elections},
  author={Boehmer, Niclas and Bredereck, Robert and Faliszewski, Piotr and Niedermeier, Rolf and Szufa, Stanis{\l}aw},
  journal={arXiv preprint arXiv:2105.07815},
  year={2021}
}

@article{kamada1989algorithm,
  title={An algorithm for drawing general undirected graphs},
  author={Kamada, Tomihisa and Kawai, Satoru and others},
  journal={Information processing letters},
  volume={31},
  number={1},
  pages={7--15},
  year={1989}
}

@article{mccune2024monotonicity,
  title={Monotonicity anomalies in Scottish local government elections},
  author={McCune, David and Graham-Squire, Adam},
  journal={Social Choice and Welfare},
  volume={63},
  number={1},
  pages={69--101},
  year={2024},
  publisher={Springer}
}

@article{tideman1995single,
  title={The single transferable vote},
  author={Tideman, Nicolaus},
  journal={Journal of Economic Perspectives},
  volume={9},
  number={1},
  pages={27--38},
  year={1995},
  publisher={American Economic Association}
}

@inproceedings{MaWa13a,
	author    = {Nicholas Mattei and Toby Walsh},
	title     = {PrefLib: A Library of Preference Data},
	year      = {2013},
	booktitle = {Proceedings of the 3rd International Conference on Algorithmic Decision Theory (ADT 2013)},
	url = {http://preflib.org}
}

@misc{smith2000analysis,
  author={Smith, Warren D.},
  title={Analysis of the Skate Data},
  year={2000},
  note={\url{https://rangevoting.org/Skating1998.html}}
}

@article{holliday2023split,
  title={Split Cycle: a new Condorcet-consistent voting method independent of clones and immune to spoilers},
  author={Holliday, Wesley H and Pacuit, Eric},
  journal={Public Choice},
  volume={197},
  number={1},
  pages={1--62},
  year={2023},
  publisher={Springer}
}

@article{woodall1994properties,
  title={Properties of Preferential Election Rules},
  author={Woodall, D.},
  journal={Voting Matters},
  volume={3},
  number={Paper 4},
  year={1994}
}

@incollection{baumeister2024multiwinner,
  title={Multiwinner voting},
  author={Baumeister, Dorothea and Faliszewski, Piotr and Rothe, J{\"o}rg and Skowron, Piotr},
  booktitle={Economics and Computation: An Introduction to Algorithmic Game Theory, Computational Social Choice, and Fair Division},
  pages={403--465},
  year={2024},
  publisher={Springer}
}

@inproceedings{conitzer2009preference,
  title={Preference Functions that Score Rankings and Maximum Likelihood Estimation.},
  author={Conitzer, Vincent and Rognlie, Matthew and Xia, Lirong},
  booktitle={IJCAI},
  volume={9},
  pages={109--115},
  year={2009}
}

@article{faliszewski2025identifying,
  title={Identifying Imperfect Clones in Elections},
  author={Faliszewski, Piotr and Janeczko, Lukasz and Lisowski, Grzegorz and Pekarkova, Kristyna and Schlotter, Ildiko},
  journal={arXiv preprint arXiv:2509.11261},
  year={2025}
}

@article{procaccia2025clone,
  title={Clone-robust AI alignment},
  author={Procaccia, Ariel D and Schiffer, Benjamin and Zhang, Shirley},
  journal={arXiv preprint arXiv:2501.09254},
  year={2025}
}

@inproceedings{dwork2006differential,
  title={Differential privacy},
  author={Dwork, Cynthia},
  booktitle={International colloquium on automata, languages, and programming},
  pages={1--12},
  year={2006},
  organization={Springer}
}

@inproceedings{elkind2012clone,
  title={Clone structures in voters' preferences},
  author={Elkind, Edith and Faliszewski, Piotr and Slinko, Arkadii},
  booktitle={Proceedings of the 13th ACM conference on electronic commerce},
  pages={496--513},
  year={2012}
}

@inproceedings{berker2025independence,
  title={From Independence of Clones to Composition Consistency: A Hierarchy of Barriers to Strategic Nomination},
  author={Berker, Ratip Emin and Casacuberta, S{\'\i}lvia and Robinson, Isaac and Ong, Christopher and Conitzer, Vincent and Elkind, Edith},
  booktitle={Proceedings of the 26th ACM Conference on Economics and Computation},
  pages={1109--1109},
  year={2025}
}

@article{zavist1989complete,
  title={Complete independence of clones in the ranked pairs rule},
  author={Zavist, Thomas M and Tideman, Nicolaus},
  journal={Social Choice and Welfare},
  volume={6},
  number={2},
  pages={167--173},
  year={1989},
  publisher={Springer}
}

@article{laffond1996composition,
  title={Composition-consistent tournament solutions and social choice functions},
  author={Laffond, Gilbert and Lain{\'e}, Jean and Laslier, Jean-Fran{\c{c}}ois},
  journal={Social Choice and Welfare},
  volume={13},
  number={1},
  pages={75--93},
  year={1996},
  publisher={Springer}
}

@article{tessler2024ai,
  title={AI can help humans find common ground in democratic deliberation},
  author={Tessler, Michael Henry and Bakker, Michiel A and Jarrett, Daniel and Sheahan, Hannah and Chadwick, Martin J and Koster, Raphael and Evans, Georgina and Campbell-Gillingham, Lucy and Collins, Tantum and Parkes, David C and others},
  journal={Science},
  volume={386},
  number={6719},
  pages={eadq2852},
  year={2024},
  publisher={American Association for the Advancement of Science}
}

@inproceedings{DelemazureEtAl2026ApproximateClones,
  title     = {Detecting Approximate Clones under Approval Voting},
  author    = {Delemazure, Th{\'e}o and
               Faliszewski, Piotr and
               Janeczko, {\L}ukasz and
               Knop, Du{\v{s}}an and
               Pek{\'a}rkov{\'a}, Krist{\'y}na and
               Pokorn{\'y}, Jan and
               Schierreich, {\v{S}}imon and
               Schlotter, Ildik{\'o}},
  booktitle = {Proceedings of the 25th International Conference on Autonomous Agents and Multiagent Systems (AAMAS 2026)},
  year      = {2026},
  note      = {Extended abstract},
}

\end{document}